\newtheorem{definition}{Definition}
\newtheorem{proposition}{Proposition}
\newtheorem{theorem}{Theorem}
\newtheorem{example}{Example}
\newtheorem{remark}{Remark}
\begin{document}

\title{\uppercase{Chaotic iterations for steganography}\\
Stego-security and topological-security}


\author{ Nicolas Friot, Christophe Guyeux, and Jacques M. Bahi\\
\\
Computer Science Laboratory LIFC\\
University of Franche-Comt\'{e}\\
16 route de Gray, Besan\c con, France\\
\\
\{nicolas.friot, christophe.guyeux, jacques.bahi\}@lifc.univ-fcomte.fr\\
}

\maketitle

\noindent\textbf{Keywords: }Steganography; Topology; Security;
Information hiding; Stego-security; Topological-security;\\ Chaotic Iterations.


\abstract{In this paper is proposed a novel steganographic scheme based on
chaotic iterations. This research work takes place into the information hiding
security fields. We show that the proposed scheme is stego-secure, which is the
highest level of security in a well defined and studied category of attack
called ``watermark-only attack''. Additionally, we prove that this scheme
presents topological properties so that it is one of the firsts able to face, at
least partially, an adversary when considering the others categories of attacks
defined in the literature. }


\section{Introduction}
\label{sec:introduction}

\noindent Robustness and security are two major concerns in information hiding
\cite{DBLP:conf/sswmc/KatzenbeisserD04,DBLP:conf/mdai/Domingo-FerrerB08}.
These two concerns have been defined in \cite{Kalker2001} as follows. ``Robust
watermarking is a mechanism to create a communication channel that is
multiplexed into original content [...]. It is required that, firstly, the
perceptual degradation of the marked content [...] is minimal and, secondly,
that the capacity of the watermark channel degrades as a smooth function of the
degradation of the marked content. [...]. Watermarking security refers to the
inability by unauthorized users to have access to the raw watermarking channel.
[...] to remove, detect and estimate, write or modify the raw watermarking
bits.'' We will focus in this research work on security.

In the framework of watermarking and steganography, security has seen several important developments since the last decade~\cite{BarniBF03,Cayre2005,Ker06,DBLP:journals/tdp/Bras-AmorosD08}. 
The first fundamental work in security was made by Cachin in the context of steganography~\cite{Cachin2004}. 
Cachin interprets the attempts of an attacker to distinguish between an innocent image and a stego-content as a hypothesis testing problem. 
In this document, the basic properties of a stegosystem are defined using the notions of entropy, mutual information, and relative entropy. 
Mittelholzer, inspired by the work of Cachin, proposed the first theoretical framework for analyzing the security of a watermarking scheme~\cite{Mittelholzer99}.

These efforts to bring a theoretical framework for security in steganography and watermarking have been followed up by Kalker, who tries to clarify the concepts (robustness \emph{vs.} security), and the classifications of watermarking attacks~\cite{Kalker2001}. 
This work has been deepened by Furon \emph{et al.}, who have translated Kerckhoffs' principle (Alice and Bob shall only rely on some previously shared secret for privacy), from cryptography to data hiding~\cite{Furon2002}. 
They used Diffie and Hellman methodology, and Shannon's cryptographic framework~\cite{Shannon49}, to classify the watermarking attacks into categories, according to the type of information Eve has access to~\cite{Cayre2005,Perez06}, namely: Watermarked Only Attack (WOA), Known Message Attack (KMA), Known Original Attack (KOA), and Constant-Message Attack (CMA).
Levels of security have been recently defined in these setups.
The highest level of security in WOA is called stego-security \cite{Cayre2008},
whereas topological-security tends to improve the ability to withstand attacks
in KMA, KOA, and CMA setups \cite{gfb10:ip}.

To the best of our knowledge, there exist only two information hiding schemes
that are both stego-secure and topologically-secure \cite{gfb10:ip}. The first
one is based on a spread spectrum technique called Natural Watermarking. It is stego-secure when its parameter $\eta$ is equal to $1$ \cite{Cayre2008}. 
Unfortunately, this scheme is neither robust, nor able to face an attacker in KOA and KMA setups, due to its lack of a topological property called expansivity \cite{gfb10:ip}.
The second scheme both topologically-secure and stego-secure is based on chaotic
iterations \cite{guyeux10ter}. However, it allows to embed securely only one bit per embedding parameters. 
The objective of this research work is to improve the scheme presented by authors of \cite{guyeux10ter}, in such a way that more than one bit can be embedded.

The remainder of this document is organized as follows.
In Section \ref{sec:Basic Recalls}, some basic recalls concerning both chaotic iterations and Devaney's chaos are given.
In Section \ref{section:IH based on CIs} are presented results and information hiding scheme on which our work is based.
Classes of attacks considered in this paper are detailed in Section \ref{section:IH security}.
Stego-security and topological-security are recalled too in this section.
The new information hiding scheme is given in Section \ref{section:Algorithm}. 
Its stego-security is studied in the next section.
The topological framework making it possible to evaluate topological-security is
introduced in Section \ref{sec:SCISMM-topological-model}. Then the topological
properties of our scheme are investigated in the next section, leading to the
evaluation of its topological-security.
This research work ends by a conclusion section where our contribution is summarized and intended future researches are presented.

\section{Basic Recalls}
\label{sec:Basic Recalls}

\subsection{Chaotic Iterations}
\label{sec:chaotic iterations}

In the sequel $S^{n}$ denotes the $n^{th}$ term of a sequence $S$ and
$V_{i}$ is for the $i^{th}$ component of a vector $V$. Finally, the following notation
is used: $\llbracket0;N\rrbracket=\{0,1,\hdots,N\}$.\newline

Let us consider a \emph{system} of a finite number $\mathsf{N}$ of elements (or
\emph{cells}), so that each cell has a boolean \emph{state}. A sequence of length
$\mathsf{N}$ of boolean states of the cells corresponds to a particular
\emph{state of the system}. A sequence that elements belong into $\llbracket
0;\mathsf{N-1} \rrbracket $ is called a \emph{strategy}. The set of all
strategies is denoted by $\mathbb{S}.$

\begin{definition}
\label{Def:chaotic iterations}
The set $\mathds{B}$ denoting $\{0,1\}$, let
$f:\mathds{B}^{\mathsf{N}}\longrightarrow \mathds{B}^{\mathsf{N}}$ be a function
and $S\in \mathbb{S}$ be a strategy.
The so-called \emph{chaotic iterations} are
defined by $x^0\in \mathds{B}^{\mathsf{N}}$ and $\forall (n,i) \in
\mathds{N}^{\ast} \times \llbracket0;\mathsf{N-1}\rrbracket$:
\begin{equation*}
x_i^n=\left\{
\begin{array}{ll}
x_i^{n-1} & \text{ if }S^n\neq i, \\
\left(f(x^{n-1})\right)_{S^n} & \text{ if }S^n=i.\end{array}\right.\end{equation*}
\end{definition}

\subsection{Devaney's Chaotic Dynamical Systems}
\label{sec:Devaney}

Some topological definitions and properties taken from the
mathematical theory of chaos are recalled in this section.



Let $(\mathcal{X},d)$ be a metric space and $f$ a continuous function on  $(\mathcal{X},d)$.


\begin{definition}
 $f$ is said to be \emph{topologically transitive} if, for any pair
of open sets $U,V\subset \mathcal{X}$, there exists $k>0$ such that $f^{k}(U)\cap
V\neq\varnothing $.
\end{definition}

\begin{definition}
$(\mathcal{X},f)$ is said to be \emph{regular} if the set of
periodic points is dense in $\mathcal{X}$.
\end{definition}

\begin{definition}
$f$ has \emph{sensitive dependence on initial conditions} if there exists $\delta >0$ such that, for any $x\in
\mathcal{X}$ and any neighborhood $V$ of $x$, there exist $y\in V$ and
$n\geqslant 0$ such that $d\left(f^{n}(x),f^{n}(y) \right)>\delta $.

$\delta $ is called the \emph{constant of sensitivity} of $f$.
\end{definition}

It is now possible to introduce the well-established mathematical definition of
chaos~\cite{Devaney89},

\begin{definition}
A function $f:\mathcal{X}\longrightarrow \mathcal{X}$ is said to be
\emph{chaotic} on $\mathcal{X}$ if:
\begin{enumerate}
  \item $f$ is regular,
  \item $f$ is topologically transitive,
  \item $f$ has sensitive dependence on initial conditions.
\end{enumerate}
\end{definition}

When $f$ is chaotic, then the system $(\mathcal{X}, f)$ is chaotic and quoting
Devaney: ``it is unpredictable because of the sensitive dependence on initial
conditions. It cannot be broken down or simplified into two subsystems which do
not interact because of topological transitivity. And in the midst of this
random behavior, we nevertheless have an element of regularity''. Fundamentally
different behaviors are consequently possible and occur in an unpredictable
way.

Let us finally remark that,

\begin{theorem}[\cite{Banks92}]\label{theo:banks}
If a function is regular and topologicaly transitive on a metric space, then the
function is sensitive on initial conditions.
\end{theorem}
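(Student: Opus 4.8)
\emph{Proof proposal.} This is the classical redundancy theorem of Banks, Brooks, Cairns, Davis and Stacey, and the plan is to run its standard argument: from regularity one extracts a \emph{uniform} constant of sensitivity, and then topological transitivity is used to drive two trajectories that start arbitrarily close together apart by that constant. Throughout, $\mathcal{X}$ is understood to be infinite (the only relevant case here), which is exactly what guarantees the existence of two periodic points lying on disjoint orbits and is where the hypotheses really bite.

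\textbf{Step 1: the constant.} Using regularity, I would first pick two periodic points $p_0,q_0$ whose orbits are disjoint; since a periodic orbit is a finite set, $\Delta=\min\{\,d(f^i(p_0),f^j(q_0)) : i,j\geqslant 0\,\}$ is strictly positive. Set $\delta=\Delta/8$. The elementary observation, immediate from the triangle inequality, is that for \emph{every} $x\in\mathcal{X}$ at least one of $p_0,q_0$ — call it $q=q(x)$ — satisfies $d\big(x,f^i(q)\big)\geqslant 4\delta$ for all $i\geqslant 0$: otherwise $x$ would be $4\delta$-close to a point of each of the two orbits, forcing those two points within $8\delta=\Delta$, a contradiction.

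\textbf{Step 2: using transitivity.} Now fix $x\in\mathcal{X}$ and a neighbourhood $V$ of $x$, and put $U=V\cap B(x,\delta)$. By regularity there is a periodic point $p\in U$; let $n$ be its period. Take $q=q(x)$ from Step 1 and form
\[
 W=\bigcap_{i=0}^{n} f^{-i}\big(B(f^i(q),\delta)\big),
\]
which is open (a finite intersection of preimages of open balls under the continuous map $f$) and nonempty since $q\in W$. Topological transitivity applied to the pair $(U,W)$ provides $y\in U$ and $k>0$ with $f^k(y)\in W$. Choose $j=\lfloor k/n\rfloor+1$, so that $m:=nj-k\in\{1,\dots,n\}$. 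Then $f^k(y)\in W$ gives $f^{nj}(y)=f^{k+m}(y)\in B(f^m(q),\delta)$, while periodicity gives $f^{nj}(p)=p\in B(x,\delta)$. Feeding $d\big(x,f^{nj}(p)\big)<\delta$, $d\big(f^{nj}(y),f^m(q)\big)<\delta$ and $d\big(x,f^m(q)\big)\geqslant 4\delta$ into
\begin{align*}
 d\big(x,f^m(q)\big)\leqslant{} & d\big(x,f^{nj}(p)\big)+d\big(f^{nj}(p),f^{nj}(x)\big)\\
 & {}+d\big(f^{nj}(x),f^{nj}(y)\big)+d\big(f^{nj}(y),f^m(q)\big)
\end{align*}
collapses to $2\delta\leqslant d\big(f^{nj}(p),f^{nj}(x)\big)+d\big(f^{nj}(x),f^{nj}(y)\big)$, so one of the two summands is $\geqslant\delta$. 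Since both $p$ and $y$ lie in $V$, in either case we have produced a point of $V$ whose $f^{nj}$-iterate is at distance $\geqslant\delta$ from $f^{nj}(x)$; taking $\delta/2$ as the constant of sensitivity turns this into the strict inequality required by the definition, which finishes the proof.

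\textbf{Main obstacle.} The only genuinely delicate point is Step 1: sensitivity demands a single constant valid for all $x$ simultaneously, and producing it is precisely the job of the two-disjoint-orbits trick (and the place where infiniteness of $\mathcal{X}$ is used). Everything after that is bookkeeping with the triangle inequality, together with the mild arithmetic of choosing $j$ so that $nj$ is at once a multiple of the period of $p$ and an exponent at which $W$ controls the orbit of $y$.
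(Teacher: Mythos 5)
Your proof is correct, and it is exactly the classical Banks--Brooks--Cairns--Davis--Stacey argument from the reference \cite{Banks92}; the paper itself imports this theorem without proof, so there is no internal argument to compare against. The one substantive point worth keeping is your explicit infiniteness assumption on $\mathcal{X}$: it is genuinely needed to obtain two periodic points on disjoint orbits (a single periodic orbit is regular and transitive but not sensitive), and the paper's statement silently omits it; everything else in your Steps 1 and 2 is the standard uniform-constant-plus-transitivity bookkeeping and checks out.
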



\section{Information hiding based on chaotic iterations}
\label{section:IH based on CIs}

\subsection{Topology of Chaotic Iterations}\label{sec:topological}

\noindent In this section, we give the outline proofs establishing the
topological properties of chaotic iterations. As our scheme is inspired by the work of Guyeux \emph{et al.} \cite{gfb10:ip,guyeux10ter,bg10:ij}, the proofs detailed at the end of this document will follow a same canvas.

Let us firstly introduce some notations and terminologies.

\begin{definition}
\label{def:strategy-adapter}
Let $\mathsf{k} \in \mathds{N}^\ast$. 
A \emph{strategy adapter} is a sequence which elements belong into $\llbracket 0, \mathsf{k-1} \rrbracket$.
The set of all strategies with terms in $\llbracket 0, \mathsf{k-1} \rrbracket$
is denoted by  $\mathbb{S}_\mathsf{k}$.
\end{definition}

\begin{definition}
\label{def:discret-boolean-metric}
The \emph{discrete boolean metric} is the
application $\delta: \mathds{B} \longrightarrow \mathds{B}$ defined by
$\delta(x,y)=0\Leftrightarrow x=y.$
\end{definition}

\begin{definition}
Let $k \in \mathds{N}^\ast$. 
The \emph{initial function} is the map $i_k$ defined by:
\begin{equation*}
\begin{array}{cccc}
i_k: & \mathbb{S}_k & \longrightarrow & \llbracket 0, \mathsf{k-1} \rrbracket \\
   & (S^{n})_{n\in \mathds{N}} & \longmapsto & S^{0} \\
\end{array}
\end{equation*}
\end{definition}

\begin{definition}
Let $k \in \mathds{N}^\ast$. 
The \emph{shift function} is the map $\sigma_k$ defined by:
\begin{equation*}
\begin{array}{cccc}
\sigma_k : & \mathbb{S}_k & \longrightarrow & \mathbb{S}_k\\
         & (S^{n})_{n\in \mathds{N}} & \longmapsto & (S^{n+1})_{n\in \mathds{N}}
\end{array}
\end{equation*}
\end{definition}

\begin{definition}
Given a function $f:\mathds{B}^{\mathsf{N}} \rightarrow \mathds{B}^{\mathsf{N}}$, the function \emph{$F_{f}$} is defined by:
\begin{equation*}
\begin{array}{ll}
F_{f}: & \llbracket 0;\mathsf{N-1}\rrbracket \times
\mathds{B}^{\mathsf{N}} \longrightarrow \mathds{B}^{\mathsf{N}} \\
 & (k,E) \longmapsto \left( E_{j}.\delta (k,j)+f(E)_{k}.\overline{\delta
(k,j)}\right)_{j\in \llbracket 0;\mathsf{N-1}\rrbracket} \\
\end{array}
\end{equation*}
\end{definition}

\begin{definition}
The phase space used for chaotic iterations is denoted by $\mathcal{X}_1$ and
defined by $\mathcal{X}_1=\mathbb{S}_\mathsf{N} \times
\mathds{B}^{\mathsf{N}}.$
\end{definition}

\begin{definition}
Given a function $f:\mathds{B}^{\mathsf{N}} \rightarrow \mathds{B}^{\mathsf{N}}$, the map \emph{$G_{f}$} is defined by:
\begin{equation*}
\begin{array}{cccc}
G_{f}: & \mathcal{X}_1 & \longrightarrow & \mathcal{X}_1 \\
 & \left( S,E\right) & \longmapsto &\left( \sigma_\mathsf{N} (S),F_{f}(i_\mathsf{N}(S),E)\right)
 \\
\end{array}
\end{equation*}
\end{definition}

With these definitions, chaotic iterations can be described by the following iterations of the
discret dynamical system:

\begin{equation*}
\left\{
\begin{array}{l}
X^{0}\in \mathcal{X}_1\\
\forall k \in \mathds{N}^\ast, X^{k+1}=G_{f}(X^{k})
\end{array}
\right.
\end{equation*}

Finally, a new distance $d_1$ between two points has been defined by:

\begin{definition}[Distance $d_1$ on  $\mathcal{X}_1$]\label{def:distance-on-X1}
$\forall (S,E),(\check{S},\check{E} )\in \mathcal{X}_1,$
$d_1((S,E);(\check{S},\check{E}))=d_{\mathds{B}^{\mathsf{N}}}(E,\check{E})+d_{\mathbb{S}_\mathsf{N}}(S,\check{S}),$
where:
\begin{itemize}
\item
$\displaystyle{d_{\mathds{B}^{\mathsf{N}}}(E,\check{E})}=\displaystyle{\sum_{k=0}^{\mathsf{N-1}}\delta
(E_{k},\check{E}_{k})} \in \llbracket 0 ; \mathsf{N} \rrbracket$
\item
$\displaystyle{d_{\mathbb{S}_\mathsf{N}}(S,\check{S})}=\displaystyle{\dfrac{9}{\mathsf{N}}\sum_{k=1}^{\infty
}\dfrac{|S^{k}-\check{S}^{k}|}{10^{k}}} \in [0 ; 1].$
\end{itemize}
are respectively two distances on $\mathds{B}^{\mathsf{N}}$ and $\mathbb{S}_\mathsf{N}$
($\forall \mathsf{N} \in \mathds{N}^\ast$).
\end{definition}

\begin{remark}
This new distance has been introduced by authors of \cite{bg10:ij}
 to satisfy the following
requirements. When the number of different cells between two systems is
increasing, then their distance should increase too. In addition, if two systems
present the same cells and their respective strategies start with the same
terms, then the distance between these two points must be small, because the
evolution of the two systems will be the same for a while. The distance presented
above follows these recommendations. 
\end{remark}

It is then proven that,

\begin{proposition}\label{Prop:continuite}
$G_f$ is a continuous function on $(\mathcal{X}_1,d_1)$, for all $f:\mathds{B}^\mathsf{N} \rightarrow \mathds{B}^\mathsf{N}$.
\end{proposition}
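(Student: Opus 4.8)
The plan is to show continuity of $G_f = (\sigma_\mathsf{N}(S), F_f(i_\mathsf{N}(S),E))$ at an arbitrary point $(S,E) \in \mathcal{X}_1$ by an $\varepsilon$--$\alpha$ argument with respect to the distance $d_1$, exploiting the fact that $d_1$ splits as a sum of a contribution on $\mathds{B}^\mathsf{N}$ and a contribution on $\mathbb{S}_\mathsf{N}$. The key observation is that the boolean component $\mathds{B}^\mathsf{N}$ is discrete under $d_{\mathds{B}^\mathsf{N}}$: two boolean vectors are at distance $0$ iff they are equal, and otherwise at distance at least $1$. Hence, to control the boolean part of $G_f$, it suffices to force the input boolean vectors to coincide and the first term of the two strategies to coincide.

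First I would fix $(S,E)$ and $\varepsilon > 0$, and handle the two summands of $d_1(G_f(S,E), G_f(\check S,\check E))$ separately. For the boolean part, note that $F_f(i_\mathsf{N}(S),E)$ depends only on $E$ and on $S^0$; so if $\check E = E$ and $\check S^0 = S^0$, then $F_f(i_\mathsf{N}(S),E) = F_f(i_\mathsf{N}(\check S),\check E)$ and this summand vanishes. I would therefore choose $\alpha$ small enough that $d_1((S,E),(\check S,\check E)) < \alpha$ forces $d_{\mathds{B}^\mathsf{N}}(E,\check E) < 1$ — hence $E = \check E$ by discreteness — and also forces $|S^1 - \check S^1|/10$ to be small enough (via the explicit series defining $d_{\mathbb{S}_\mathsf{N}}$) that $S^1 = \check S^1$, i.e. $\check S$ and $S$ agree on their first term after a shift; concretely one takes $\alpha \leqslant \min(1, \tfrac{9}{\mathsf{N}}\cdot 10^{-2})$ or similar, so that a nonzero discrepancy in $E$ or in $S^1$ would already make $d_1 \geqslant \alpha$.

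For the strategy part, $d_{\mathbb{S}_\mathsf{N}}(\sigma_\mathsf{N}(S),\sigma_\mathsf{N}(\check S)) = \tfrac{9}{\mathsf{N}}\sum_{k=1}^\infty \frac{|S^{k+1}-\check S^{k+1}|}{10^k}$, which is $10$ times the tail $\tfrac{9}{\mathsf{N}}\sum_{k=2}^\infty \frac{|S^k-\check S^k|}{10^k}$ of the series computing $d_{\mathbb{S}_\mathsf{N}}(S,\check S)$. So if $d_{\mathbb{S}_\mathsf{N}}(S,\check S)$ is small, the first few terms $|S^1-\check S^1|, \dots, |S^m-\check S^m|$ are forced to vanish (the terms being non-negative integers divided by powers of $10$), and the remaining tail — which is all that survives the shift — is bounded by $10$ times something small. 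Making this precise: given $\varepsilon$, pick $m$ with $\frac{9}{\mathsf{N}} \cdot \frac{\mathsf{N-1}}{10^{m}}\cdot\frac{10}{9} < \varepsilon/2$ (a crude bound on the tail past index $m$), then shrink $\alpha$ further so that $d_1 < \alpha$ forces $S^k = \check S^k$ for $k \leqslant m$; then the shifted strategies agree up to index $m-1$ and their distance is at most the small tail, hence $< \varepsilon/2$. Combining with the boolean part (which is then exactly $0$) gives $d_1(G_f(S,E),G_f(\check S,\check E)) < \varepsilon$.

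The only mildly delicate point is the bookkeeping on the strategy metric: one must be careful that the factor $10$ introduced by the shift does not spoil convergence, but since this factor is absorbed by choosing $m$ large (the tail decays geometrically), there is no real obstacle — just the routine estimate $\sum_{k > m} 10^{-k} = \tfrac{1}{9}10^{-m}$. I do not expect the continuity of $f$ itself to enter, since $\mathds{B}^\mathsf{N}$ is finite and any map on it is automatically continuous; the substance is entirely in the interplay between the shift and the decimal-weighted metric on $\mathbb{S}_\mathsf{N}$.
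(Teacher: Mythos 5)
Your proposal is correct and follows essentially the same canvas as the paper's own argument (which is spelled out in detail for the analogous map $\mathcal{G}_{f_0}$ on $(\mathcal{X}_2,d_2)$): the boolean component is handled via the discreteness of $d_{\mathds{B}^{\mathsf{N}}}$ together with agreement of the first strategy term, and the strategy component via the observation that the factor $10$ introduced by the shift is absorbed by the geometric decay of the tail; your $\varepsilon$--$\alpha$ formulation versus the paper's sequential-continuity formulation is an immaterial difference. One small slip to fix: the boolean output $F_f(i_{\mathsf{N}}(S),E)$ depends on $S^0$, so it is $S^0=\check{S}^0$ (not $S^1=\check{S}^1$) that you must extract from the smallness of $d_{\mathbb{S}_{\mathsf{N}}}(S,\check{S})$ --- and note that the paper's displayed series for $d_{\mathbb{S}_{\mathsf{N}}}$ starts at $k=1$ and hence does not see $S^0$ at all (an off-by-one in the definition, since otherwise $d_1$ would fail to separate points differing only in $S^0$), so one must read the sum as starting at $k=0$ or index strategies from $1$; with that convention fixed your argument goes through verbatim.
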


Let us now recall the iteration function used by authors of \cite{guyeux10ter}.

\begin{definition}
The \emph{vectorial negation} is the function defined by:
\begin{equation*}
\begin{array}{cccc}
f_{0} : & \mathbb{B}^\mathsf{N} & \longrightarrow & \mathbb{B}^\mathsf{N}\\
& (b_0,\cdots,b_\mathsf{N-1}) & \longmapsto &
(\overline{b_0},\cdots,\overline{b_\mathsf{N-1}})\\
\end{array}
\end{equation*}
\end{definition}
In the metric space $(\mathcal{X}_1,d_1)$, $G_{f_0}$  satisfies the three
conditions for Devaney's chaos: regularity, transitivity, and
sensitivity. 
So,

\begin{theorem}
$G_{f_0}$ is a chaotic map on $(\mathcal{X}_1,d_1)$ according to Devaney.
\end{theorem}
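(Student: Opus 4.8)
The plan is to verify the three conditions of Devaney's chaos for $G_{f_0}$ on $(\mathcal{X}_1, d_1)$, relying on Proposition~\ref{Prop:continuite} for continuity and on Theorem~\ref{theo:banks} to reduce the workload: it suffices to prove that $G_{f_0}$ is regular and topologically transitive, and sensitivity will follow automatically. So the real content is two proofs: density of periodic points, and transitivity.

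\medskip

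\noindent\textbf{Transitivity.} Given two nonempty open sets $\mathcal{U}, \mathcal{V} \subset \mathcal{X}_1$, I would pick $(S,E) \in \mathcal{U}$ and $(\check S, \check E) \in \mathcal{V}$ together with radii witnessing that small balls around these points stay inside $\mathcal{U}$ and $\mathcal{V}$. By the definition of $d_1$, a ball of radius $<1$ around $(S,E)$ forces the boolean part to equal $E$ exactly and pins down a finite prefix $S^0, \dots, S^{k_0}$ of the strategy. The idea is then to build a point $(\widetilde S, E)$ in that ball whose strategy, after the first $k_0+1$ terms, steers the boolean configuration from $E$ to $\check E$ in finitely many steps (this is possible because with $f_0$ = vectorial negation, any coordinate can be toggled by selecting it in the strategy, so any boolean state is reachable from any other), and then continues with the prefix of $\check S$ needed to land inside $\mathcal{V}$. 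Choosing the number of iterations $n$ equal to $k_0+1$ plus the number of toggle steps, one gets $G_{f_0}^{n}(\widetilde S, E) \in \mathcal{V}$, hence $G_{f_0}^{n}(\mathcal{U}) \cap \mathcal{V} \neq \varnothing$.

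\medskip

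\noindent\textbf{Regularity.} For density of periodic points, given $(S,E) \in \mathcal{X}_1$ and $\varepsilon > 0$, I would approximate $(S,E)$ by a periodic point $(\overline S, E)$: keep the boolean component $E$ (so the $\mathds{B}^\mathsf{N}$ distance is zero), keep a long enough prefix $S^0, \dots, S^{k_1}$ of the strategy so that the $\mathbb{S}_\mathsf{N}$ distance is below $\varepsilon$, then append a block of indices that drives the configuration $F_{f_0}$-iterated from $E$ back to $E$ (again using that $f_0$ lets us toggle any coordinate, so after applying the chosen prefix we can always return to $E$ in finitely many further steps), and finally repeat the whole finite block periodically. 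This produces a strategy that is eventually periodic with the boolean dynamics returning to $E$ each period, so $(\overline S, E)$ is a periodic point of $G_{f_0}$ within $\varepsilon$ of $(S,E)$.

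\medskip

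\noindent Then, invoking Theorem~\ref{theo:banks} with the already-established continuity (Proposition~\ref{Prop:continuite}), sensitive dependence on initial conditions holds as well, and all three Devaney conditions are met, so $G_{f_0}$ is chaotic on $(\mathcal{X}_1, d_1)$. The main obstacle I anticipate is the bookkeeping in the transitivity argument: carefully matching the radius of the ball around $(S,E)$ to the length of prefix that must be preserved, and making sure the concatenated strategy (fixed prefix, then a ``routing'' block from $E$ to $\check E$, then a prefix of $\check S$) genuinely lies in $\mathcal{U}$ while its $n$-th image genuinely lies in $\mathcal{V}$ — i.e. getting the constants and the index $n$ to line up. The reachability facts for $f_0$ themselves are easy since vectorial negation makes every boolean vector accessible from every other by a suitable finite sequence of single-coordinate updates.
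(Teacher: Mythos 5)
Your proposal is correct and follows exactly the canvas the paper uses: establish regularity and transitivity by explicit construction of strategies (exploiting that $F_{f_0}(k,E)$ toggles precisely the $k$-th coordinate, so any configuration is reachable and any ``waiting'' can be done by double toggles), then obtain sensitivity for free from Theorem~\ref{theo:banks}. The paper states this theorem as a recalled result from its references rather than proving it in situ, but its own detailed proofs of Propositions~\ref{prop:regularite}, \ref{prop:transitivite}, and \ref{prop:sensibilite} for the extended system on $\mathcal{X}_2$ proceed by the same prefix-preservation and finite-routing argument you sketch here.
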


Finally, it has been stated in ~\cite{bg10:ij} that,

\begin{proposition}
\label{prop:cardinality-X1}
The phase space $\mathcal{X}_1$ has, at least, the cardinality of the 
continuum.
\end{proposition}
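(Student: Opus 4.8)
The plan is to exhibit an injection from a set known to have the cardinality of the continuum into $\mathcal{X}_1 = \mathbb{S}_\mathsf{N} \times \mathds{B}^{\mathsf{N}}$. Since $\mathcal{X}_1$ is a product, it suffices to show that one of the factors already has at least the cardinality of the continuum, and the natural candidate is $\mathbb{S}_\mathsf{N}$, the set of sequences with terms in $\llbracket 0, \mathsf{N-1}\rrbracket$. So the first step is to reduce the statement to proving $\mathrm{Card}(\mathbb{S}_\mathsf{N}) \geqslant \mathfrak{c}$, noting that adding the (nonempty, finite) factor $\mathds{B}^{\mathsf{N}}$ cannot decrease cardinality.

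Next I would construct the injection explicitly. Recall that $\mathsf{N} \in \mathds{N}^\ast$, so $\llbracket 0, \mathsf{N-1}\rrbracket$ has at least one element; in fact for the argument to give the continuum we only need it to have at least two elements, i.e.\ $\mathsf{N} \geqslant 2$ --- the case $\mathsf{N}=1$ would have to be handled separately (and in that degenerate case $\mathbb{S}_1$ is a singleton, so one would instead have to rely on $\mathds{B}^1$... which is still finite, meaning the proposition as stated tacitly assumes $\mathsf{N}\geqslant 2$; I would remark on this). Assuming $\mathsf{N}\geqslant 2$, define
\begin{equation*}
\varphi : \{0,1\}^{\mathds{N}} \longrightarrow \mathbb{S}_\mathsf{N}, \qquad (u_n)_{n\in\mathds{N}} \longmapsto (u_n)_{n\in\mathds{N}},
\end{equation*}
which makes sense because $\{0,1\} \subseteq \llbracket 0,\mathsf{N-1}\rrbracket$, and this map is trivially injective. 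Since $\{0,1\}^{\mathds{N}}$ is the classical example of a set with the cardinality of the continuum (it is in bijection with $[0,1]$ via binary expansion, up to the countable ambiguity of dyadic rationals, a standard Cantor--Schr\"oder--Bernstein argument), we get $\mathrm{Card}(\mathbb{S}_\mathsf{N}) \geqslant \mathfrak{c}$, hence $\mathrm{Card}(\mathcal{X}_1) \geqslant \mathfrak{c}$.

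Alternatively, and perhaps more in the spirit of the distance $d_{\mathbb{S}_\mathsf{N}}$ already introduced, one could map a real number $x \in [0,1]$ to the strategy given by its digits in base $10$ (or base $\mathsf{N}$ if $\mathsf{N} \geqslant 2$) and check injectivity directly using the formula for $d_{\mathbb{S}_\mathsf{N}}$; but the abstract $\{0,1\}^{\mathds{N}}$ route is cleaner and avoids fussing over non-unique expansions. The only genuine subtlety --- and the step I would flag as the one requiring care rather than the one being hard --- is the implicit hypothesis $\mathsf{N} \geqslant 2$; everything else is immediate once the reduction to a product-with-finite-factor is made and the standard fact $\mathrm{Card}(\{0,1\}^{\mathds{N}}) = \mathfrak{c}$ is invoked. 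I would therefore keep the write-up short: state the reduction, give $\varphi$, cite the continuum cardinality of $\{0,1\}^{\mathds{N}}$, and conclude.
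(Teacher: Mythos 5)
Your argument is correct, and it is essentially the standard one. A remark on context first: the paper does not actually prove Proposition~\ref{prop:cardinality-X1} itself --- it is imported from \cite{bg10:ij} --- so the only in-paper comparison point is the proof of the analogous cardinality statement for $\mathcal{X}_2$, which likewise proceeds by exhibiting an injection from a set already known to have at least the cardinality of the continuum (there, from $\mathcal{X}_1$). Your chain of injections $\{0,1\}^{\mathds{N}} \hookrightarrow \mathbb{S}_{\mathsf{N}} \hookrightarrow \mathbb{S}_{\mathsf{N}} \times \mathds{B}^{\mathsf{N}} = \mathcal{X}_1$ is exactly in that spirit and is how the cited source handles it, so there is no real methodological divergence. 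Your caveat about $\mathsf{N} \geqslant 2$ is well taken and worth keeping: for $\mathsf{N}=1$ the set $\mathbb{S}_1$ is a singleton, $\mathcal{X}_1$ is finite, and the proposition fails, so the hypothesis is indeed tacit throughout the paper (and harmless, since a one-cell system is of no interest). Preferring $\{0,1\}^{\mathds{N}}$ over $[0,1]$ with digit expansions is also the right call, as it avoids the irrelevant bookkeeping around non-unique expansions while still delivering $\mathrm{Card}(\mathcal{X}_1) \geqslant 2^{\aleph_0}$.
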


\subsection{Chaotic Iterations for Data Hiding}

To explain how to use chaotic iterations for information hiding, we
must firstly define the significance of a given coefficient.

\subsubsection{Most and Least Significant
Coefficients}\label{sec:msc-lsc}

We first notice that terms of the original content $x$ that may be replaced by terms issued
from the watermark $y$ are less important than other: they could be changed 
without be perceived as such. More generally, a 
\emph{signification function} 
attaches a weight to each term defining a digital media,
depending on its position $t$.

\begin{definition}
A \emph{signification function} is a real sequence 
$(u^k)^{k \in \mathds{N}}$. 
\end{definition}

\begin{example}\label{Exemple LSC}
Let us consider a set of    
grayscale images stored into portable graymap format (P3-PGM):
each pixel ranges between 256 gray levels, \textit{i.e.},
is memorized with eight bits.
In that context, we consider 
$u^k = 8 - (k  \mod  8)$  to be the $k$-th term of a signification function 
$(u^k)^{k \in \mathds{N}}$. 
Intuitively, in each group of eight bits (\textit{i.e.}, for each pixel) 
the first bit has an importance equal to 8, whereas the last bit has an
importance equal to 1. This is compliant with the idea that
changing the first bit affects more the image than changing the last one.
\end{example}

\begin{definition}
\label{def:msc,lsc}
Let $(u^k)^{k \in \mathds{N}}$ be a signification function, 
$m$ and $M$ be two reals s.t. $m < M$. 
\begin{itemize}
\item The \emph{most significant coefficients (MSCs)} of $x$ is the finite 
  vector  $$u_M = \left( k ~ \big|~ k \in \mathds{N} \textrm{ and } u^k 
    \geqslant M \textrm{ and }  k \le \mid x \mid \right);$$
 \item The \emph{least significant coefficients (LSCs)} of $x$ is the 
finite vector 
$$u_m = \left( k ~ \big|~ k \in \mathds{N} \textrm{ and } u^k 
  \le m \textrm{ and }  k \le \mid x \mid \right);$$
 \item The \emph{passive coefficients} of $x$ is the finite vector 
   $$u_p = \left( k ~ \big|~ k \in \mathds{N} \textrm{ and } 
u^k \in ]m;M[ \textrm{ and }  k \le \mid x \mid \right).$$
 \end{itemize}
 \end{definition}

For a given host content $x$,
MSCs are then ranks of $x$  that describe the relevant part
of the image, whereas LSCs translate its less significant parts.
These two definitions are illustrated on Figure~\ref{fig:MSCLSC}, where the significance function $(u^k)$ is defined as in Example \ref{Exemple LSC}, $M=5$, and $m=6$.

\begin{figure}[htb]

\begin{minipage}[b]{.31\linewidth}
  \centering
  \centerline{\includegraphics[width=3.cm]{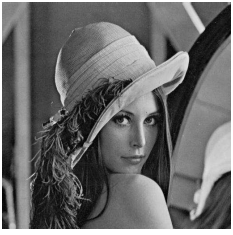}}
  \centerline{(a) Original Lena.}
\end{minipage}
\hfill
\begin{minipage}[b]{.31\linewidth}
  \centering
    \centerline{\includegraphics[width=3.cm]{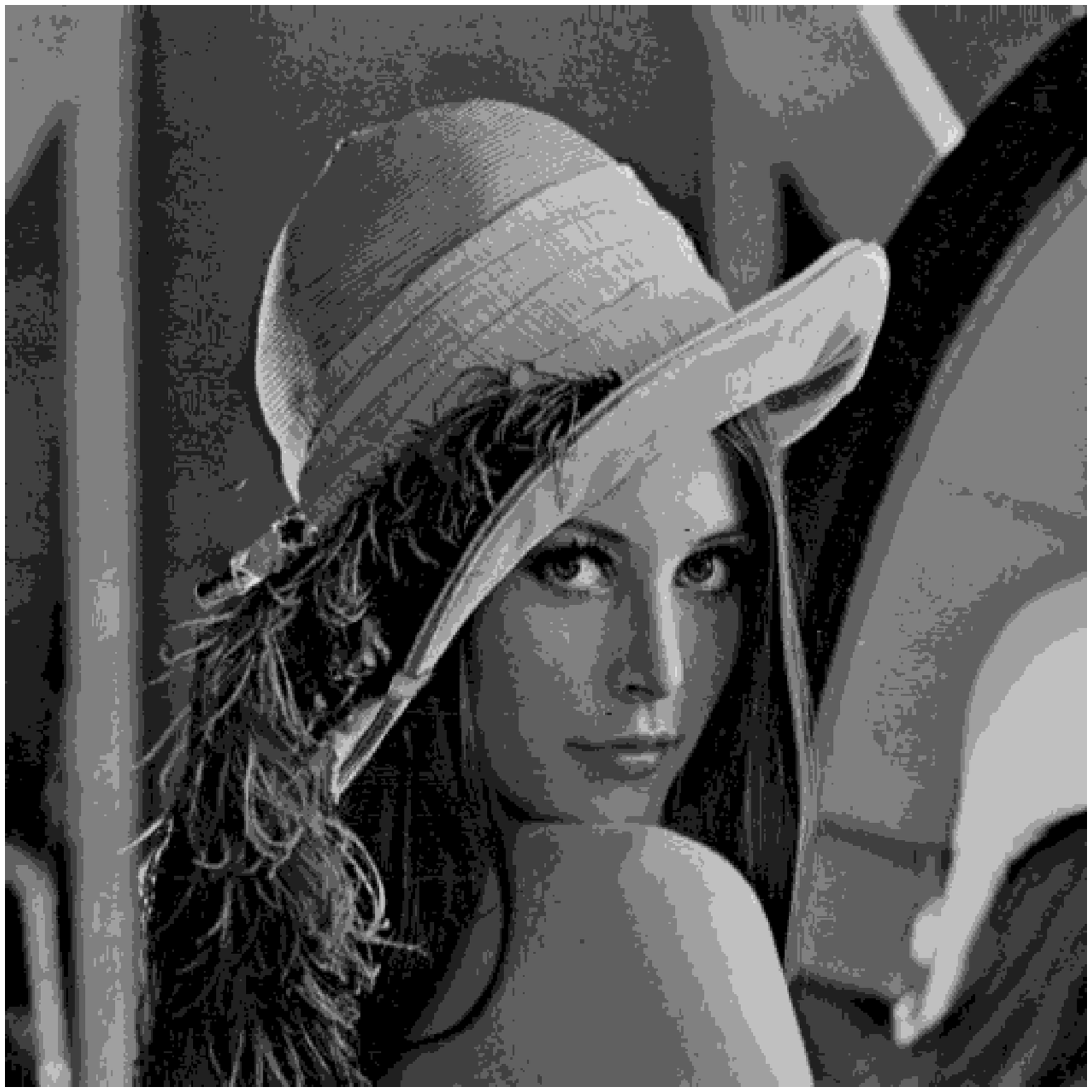}}
  \centerline{(b) MSCs of Lena.}
\end{minipage}
\hfill
\begin{minipage}[b]{0.31\linewidth}
  \centering
    \centerline{\includegraphics[width=3.cm]{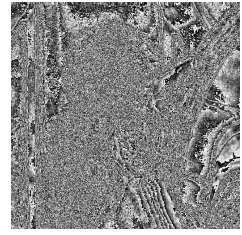}}
  
  \centerline{(c) LSCs of Lena ($\times 17$).}
\end{minipage}
\caption{Most and least significant coefficients of Lena.}
\label{fig:MSCLSC}
\end{figure}

\subsubsection{Presentation of the Scheme}

Authors of \cite{guyeux10ter} have proposed to use chaotic iterations as an information hiding scheme, as follows. 
Let:

\begin{itemize}
  \item $(K,N) \in [0;1]\times \mathds{N}$ be an embedding key,
  \item $X \in \mathbb{B}^\mathsf{N}$ be the $\mathsf{N}$ LSCs of a cover $C$,
  \item $(S^n)_{n \in \mathds{N}} \in \llbracket 0, \mathsf{N-1}
  \rrbracket^{\mathds{N}}$ be a strategy, which depends on the message to hide $M \in [0;1]$ and $K$,
  \item $f_0 : \mathbb{B}^\mathsf{N} \rightarrow \mathbb{B}^\mathsf{N}$ be the vectorial logical negation.
\end{itemize}

So the watermarked media is $C$ whose LSCs are replaced by $Y_K=X^{N}$, where:

\begin{equation*}
\left\{
 \begin{array}{l}
X^0 = X\\
\forall n < N, X^{n+1} = G_{f_0}\left(X^n\right).\\
\end{array} \right.
\end{equation*}

Two ways to generate $(S^n)_{n \in \mathds{N}}$ are given by these authors, namely
Chaotic Iterations with Independent Strategy~(CIIS) and Chaotic Iterations with Dependent
Strategy~(CIDS). 
In CIIS, the strategy is independent from the cover media $C$, whereas in CIDS the strategy will be dependent on $C$. 
As we will use the CIIS strategy in this document, we recall it below.
Finally, MSCs are not used here, as we do not consider the case of authenticated watermarking.

\subsubsection{CIIS Strategy}

Let us firstly give the definition of the Piecewise Linear Chaotic Map~(PLCM, see~\cite{Shujun1}):

\begin{equation*}
F(x,p)=\left\{
 \begin{array}{ccc}
x/p & \text{if} & x \in [0;p], \\
(x-p)/(\frac{1}{2} - p) & \text{if} & x \in \left[ p; \frac{1}{2} \right],
\\
F(1-x,p) & \text{else,} & \\
\end{array} \right.
\end{equation*}

\noindent where $p \in \left] 0; \frac{1}{2} \right[$ is a ``control parameter''.

Then, the general term of the strategy $(S^n)_n$ in CIIS setup is defined by
the following expression: $S^n = \left \lfloor \mathsf{N} \times K^n \right \rfloor +
1$, where:

\begin{equation*}
\left\{
 \begin{array}{l}
p \in \left[ 0 ; \frac{1}{2} \right] \\
K^0 = M \otimes K\\
K^{n+1} = F(K^n,p), \forall n \leq N_0\\ \end{array} \right.
\end{equation*}

\noindent in which $\otimes$ denotes the bitwise exclusive or (XOR) between two floating part numbers (\emph{i.e.}, between their binary digits representation).

\section{Data hiding security}
\label{section:IH security}

\subsection{Classification of Attacks}\label{sec:attack-classes}

In the steganography framework, attacks have been classified
in~\cite{Cayre2008} as follows.

\begin{definition}
Watermark-Only Attack (WOA) occurs when an attacker has only access to several watermarked contents.
\end{definition}

\begin{definition}
Known-Message Attack (KMA) occurs when an attacker has access to several pairs of watermarked contents and
corresponding hidden messages.
\end{definition}

\begin{definition}
Known-Original Attack (KOA) is when an attacker has access to several pairs of watermarked contents and
their corresponding original versions.
\end{definition}

\begin{definition}
Constant-Message Attack (CMA) occurs when the attacker observes several watermarked contents and only knows that the
unknown hidden message is the same in all contents.
\end{definition}

\subsection{Stego-Security}\label{sec:stego-security}

In the prisoner problem of Simmons~\cite{Simmons83,DBLP:conf/ih/BergmairK06}, Alice and Bob are in jail, and they want to, possibly, devise an escape plan by exchanging hidden messages in innocent-looking cover contents. 
These messages are to be conveyed to one another by a common warden, Eve, who over-drops all contents and can choose to interrupt the communication if they appear to be stego-contents. 

The stego-security, defined in this
framework, is the highest security level in WOA setup~\cite{Cayre2008}.
To recall it, we need the following notations:
\begin{itemize}
  \item $\mathds{K}$ is the set of embedding keys,
  \item $p(X)$ is the probabilistic model of $N_0$ initial host contents,
  \item $p(Y|K_1)$ is the probabilistic model of $N_0$ watermarked contents.
\end{itemize}

Furthermore, it is supposed in this context that each host content has been watermarked with the same secret key $K_1$ and the same embedding function $e$.

It is now possible to define the notion of stego-security:

\begin{definition}[Stego-Security]
\label{Def:Stego-security}
The embedding function $e$ is \emph{stego-secure} if and only if:
$$\forall K_1 \in \mathds{K}, p(Y|K_1)=p(X).$$
\end{definition}

To the best of our knowledge, until now, only two schemes have been proven to be
stego-secure. On the one hand, the authors of \cite{Cayre2008} have established
that the spread spectrum technique called Natural Watermarking is stego-secure
when its distortion parameter $\eta$ is equal to $1$. On the other hand, it has
been proven in~\cite{gfb10:ip} that:

\begin{proposition}
Chaotic Iterations with Independent Strategy (CIIS)  are stego-secure.
\label{prop:CIIS-stego-security}
\end{proposition}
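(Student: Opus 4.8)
The plan is to show that, on the layer that actually carries the hidden data, watermarking with CIIS amounts to XOR-ing the least significant coefficients with a bit mask that is built from the embedding key and the message alone, never from the host; coupled with the standard modelling hypothesis that the LSCs of a cover are uniformly distributed over $\mathds{B}^{\mathsf{N}}$, this forces the watermarked contents to keep exactly that distribution, for every key. First I would reduce the claim to the LSC vector: the stego-content is the cover $C$ in which the LSC vector $X$ is replaced by $Y_K=X^{N}$, while the MSCs and the passive coefficients of $C$ are left untouched; hence $p(Y\mid K_1)$ and $p(X)$ already agree on those coordinates, and it suffices to prove that the law of $X^{N}$ equals the law of $X$ for every $K_1\in\mathds{K}$.

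Next I would make the bit mask explicit. Since $f_0$ is the vectorial negation, one application of $G_{f_0}$ with current strategy term $S^{n}$ merely flips coordinate $S^{n}$ of the boolean part and leaves the other coordinates unchanged; this is immediate from the definitions of $F_{f}$ and $G_{f}$. Iterating from $X^{0}=X$ up to $X^{N}$ therefore yields $X^{N}=X\oplus\mu(S)$, where $\mu(S)\in\mathds{B}^{\mathsf{N}}$ is the mask whose $i$-th entry is the parity of the number of embedding steps (among the $N$ steps performed) at which the strategy term equals $i$. The decisive feature of the CIIS construction is now that the strategy $(S^{n})_{n}$ is obtained by iterating the PLCM seeded by $K^{0}=M\otimes K$, so it is a function of the message $M$ and the key $K_1$ \emph{only}; in particular the mask $\mu(S)$ is stochastically independent of the host $X$.

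The conclusion is then a one-line computation. Adopting the model $p(X)=\mathrm{Uniform}(\mathds{B}^{\mathsf{N}})$, for a fixed key $K_1$ and any $y\in\mathds{B}^{\mathsf{N}}$, conditioning on the value $\mu$ of the host-independent mask gives $P(X^{N}=y)=\sum_{\mu}P(\mu(S)=\mu)\,P(X=y\oplus\mu)=\sum_{\mu}P(\mu(S)=\mu)\,2^{-\mathsf{N}}=2^{-\mathsf{N}}=P(X=y)$, because $v\mapsto v\oplus\mu$ is a bijection of $\mathds{B}^{\mathsf{N}}$ preserving the uniform law. Hence $p(X^{N})=p(X)$; and since nothing here used anything about the key beyond the fact that it acts through $S$ alone, $p(Y\mid K_1)=p(X)$ holds for every $K_1\in\mathds{K}$. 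The passage from a single content to the $N_0$ contents of the security model is obtained by conditioning jointly on the $N_0$ masks — all of them cover-independent — and using that the $N_0$ host LSC vectors are i.i.d. uniform, so the $N_0$ watermarked LSC vectors are i.i.d. uniform as well.

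The genuinely delicate point is not the algebra but fixing the probabilistic framework precisely: one must justify that in CIIS the strategy (hence the mask) is independent of the cover — this is exactly where CIIS differs from CIDS, and where the argument would collapse for the dependent-strategy variant — and one must be explicit that $p(X)$ is taken uniform on $\mathds{B}^{\mathsf{N}}$, since without that hypothesis XOR-ing with an independent mask produces only a mixture of translates of $p(X)$ rather than $p(X)$ itself. Making the $N_0$-fold joint law rigorous, and handling the minor index bookkeeping in the definition of $S^{n}$, are further but routine steps.
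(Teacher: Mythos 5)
Your argument is correct, but one point of context first: the paper does not actually prove Proposition~\ref{prop:CIIS-stego-security} itself --- it imports it from~\cite{gfb10:ip} and only invokes it as a lemma (to obtain $P(m_j^{n+1}=k_i)=\frac{1}{2}$) inside the stego-security proof of SCISMM. The relevant comparison is therefore with the paper's proof of the analogous statement for SCISMM, and there your route is genuinely different. The paper argues by induction on the iteration index: assuming $x^n\sim\mathbf{U}(\mathbb{B}^{\mathsf{N}})$, it decomposes the event $x^{n+1}=k$ over events $E_{i,j}$ that fix the current strategy terms, and recombines the conditional probabilities to recover $P(x^{n+1}=k)=2^{-\mathsf{N}}$. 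You instead observe that, because $f_0$ is the coordinatewise negation, the entire $N$-step iteration collapses into a single translation $X^{N}=X\oplus\mu(S)$, where the mask $\mu(S)$ records the parity of visits of the strategy to each coordinate, and you conclude by invariance of the uniform law under XOR with a cover-independent (indeed even deterministic) mask. Your version is shorter and isolates the true mechanism --- the strategy, hence the mask, is a function of $(M,K)$ only, which is precisely what fails for CIDS; its limitation is that it depends on the update being a translation of $\mathbb{B}^{\mathsf{N}}$, so it would not extend to SCISMM, whose place/choice step substitutes message bits into the host rather than flipping host bits --- which is why the paper's event-decomposition induction is the form that generalizes. Both arguments rest on the same modelling hypothesis $X\sim\mathbf{U}(\mathbb{B}^{\mathsf{N}})$, which you rightly make explicit, exactly as the paper does at the start of its SCISMM proof; your remaining caveats (joint law of the $N_0$ contents, the off-by-one indexing in $S^n=\lfloor \mathsf{N}\times K^n\rfloor+1$) are real but routine.
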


\subsection{Topological-Security}\label{sec:chaos-security}

To check whether an information hiding scheme $S$ is topologically-secure or
not, $S$ must be written as an iterate process $x^{n+1}=f(x^n)$ on a metric space
$(\mathcal{X},d)$. This formulation is always possible~\cite{ih10}. So,

\begin{definition}[Topological-Security]
\label{Def:chaos-security-definition}
An information hiding scheme $S$ is said to be topologically-secure on
$(\mathcal{X},d)$ if its iterative process has a chaotic
behavior according to Devaney.
\end{definition}

In the approach presented by Guyeux \emph{et al.}, a data hiding scheme is secure if it is unpredictable. 
Its iterative process must satisfy the Devaney's chaos property and its level of
topological-security increases with the number of chaotic properties satisfied
by it.

This new concept of security for data hiding schemes has been proposed in~\cite{ih10} as a complementary approach to the existing framework. 
It contributes to the reinforcement of confidence into existing secure data hiding schemes. 
Additionally, the study of security in KMA, KOA, and CMA setups is realizable in this context. 
Finally, this framework can replace stego-security in situations that are not encompassed by it. 
In particular, this framework is more relevant to give evaluation of data hiding schemes claimed as chaotic.

\section{The improved algorithm}
\label{section:Algorithm}

\noindent In this section is introduced a new algorithm that generalize the
scheme presented by authors of \cite{guyeux10ter}.

Let us firstly introduce the following notations:

\begin{itemize}
  \item $x^0 \in \mathbb{B}^\mathsf{N}$ is the $\mathsf{N}$ least
  significant coefficients of a given cover media $C$.
  \item $m^0 \in \mathbb{B}^\mathsf{P}$ is the watermark to embed into $x^0$.
  \item $S_p \in \mathbb{S}_N$ is a strategy called \textbf{place strategy}.
  \item $S_c \in \mathbb{S}_P$ is a strategy called \textbf{choice strategy}.
  \item Lastly, $S_m \in \mathbb{S}_P$ is a strategy called \textbf{mixing strategy}.
\end{itemize}


Our information hiding scheme called Steganography by Chaotic Iterations and
Substitution with Mixing Message (SCISMM) is defined by
$\forall (n,i,j) \in
\mathds{N}^{\ast} \times \llbracket 0;\mathsf{N-1}\rrbracket \times \llbracket
0;\mathsf{P-1}\rrbracket$:

\begin{equation*}
\left\{
\begin{array}{l}
x_i^n=\left\{
\begin{array}{ll}
x_i^{n-1} & \text{ if }S_p^n\neq i \\
m_{S_c^n} & \text{ if }S_p^n=i.
\end{array}
\right.
\\
\\
m_j^n=\left\{
\begin{array}{ll}
m_j^{n-1} & \text{ if }S_m^n\neq j \\
 & \\
\overline{m_j^{n-1}} & \text{ if }S_m^n=j.
\end{array}
\right.
\end{array}
\right.
\end{equation*}
where $\overline{m_j^{n-1}}$ is the boolean negation of $m_j^{n-1}$.

The stego-content is the boolean vector $y = x^P \in
\mathbb{B}^\mathsf{N}$.
  
\section{Study of stego-security}

\noindent Let us prove that,

\begin{proposition}
SCISMM is stego-secure.
\end{proposition}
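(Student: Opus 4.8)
The plan is to reduce stego-security to a counting statement about the deterministic input--output map induced by a fixed key, and then compare fibre sizes. By Definition~\ref{Def:Stego-security} we must show $p(Y\mid K_1)=p(X)$ for every $K_1\in\mathds{K}$. Following the convention used for Proposition~\ref{prop:CIIS-stego-security} in~\cite{gfb10:ip}, I would model the $\mathsf{N}$ least significant coefficients $x^0$ as uniformly distributed on $\mathbb{B}^\mathsf{N}$ (this is the standard assumption on LSCs, and $p(X)$ must indeed be this uniform law, since a non-uniform $p(X)$ cannot be matched by a scheme that overwrites cover cells with message bits), and likewise model the message $m^0$ as uniform on $\mathbb{B}^\mathsf{P}$ and independent of $x^0$. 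Fixing $K_1$ fixes the three strategies $S_p,S_c,S_m$ (in the CIIS regime these are independent of the cover), so the scheme becomes a deterministic map $\Phi_{K_1}\colon\mathbb{B}^\mathsf{N}\times\mathbb{B}^\mathsf{P}\to\mathbb{B}^\mathsf{N}$, $(x^0,m^0)\mapsto x^P$, and it suffices to prove that $\Phi_{K_1}$ is \emph{balanced}, i.e.\ $\#\,\Phi_{K_1}^{-1}(y)=2^{\mathsf{P}}$ for every $y\in\mathbb{B}^\mathsf{N}$; then the uniform law on the domain pushes forward to the uniform law on $\mathbb{B}^\mathsf{N}$, which is $p(X)$.

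\medskip

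\noindent\textbf{Unrolling the recurrences.} First I would solve the mixing recurrence by induction on $n$: one gets $m_j^n=m_j^0\oplus\beta_j^n$, where $\beta_j^n\in\mathbb{B}$ is the parity of $\#\{k\in\llbracket 1;n\rrbracket : S_m^k=j\}$, a quantity depending only on $K_1$. Next, writing $T=\{S_p^1,\dots,S_p^{\mathsf{P}}\}$ for the set of cells reached by the place strategy and, for $i\in T$, $n_i$ for the largest step with $S_p^{n_i}=i$, one reads off from the scheme that $x_i^P=x_i^0$ when $i\notin T$, and $x_i^P=m_{S_c^{n_i}}^{\,n_i-1}=m_{S_c^{n_i}}^0\oplus\beta_{S_c^{n_i}}^{\,n_i-1}$ when $i\in T$. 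Thus each coordinate of $x^P$ is an affine (modulo $2$) function of a single bit of $x^0$ or of a single bit of $m^0$.

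\medskip

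\noindent\textbf{Counting fibres and concluding.} Solving $\Phi_{K_1}(x^0,m^0)=y$ then splits into independent constraints: it forces $x_i^0=y_i$ for all $i\notin T$ (leaving the $\#T$ coordinates of $x^0$ indexed by $T$ free), and for each $i\in T$ it prescribes the value of $m_{S_c^{n_i}}^0$. Hence $\#\,\Phi_{K_1}^{-1}(y)=2^{\#T}\cdot 2^{\mathsf{P}-\#C}$ when these prescriptions are mutually consistent, and $0$ otherwise, where $C=\{S_c^{n_i}:i\in T\}$. So $\Phi_{K_1}$ is balanced, with common fibre size $2^{\mathsf{P}}$, precisely when $i\mapsto S_c^{n_i}$ is injective on $T$; and since $i\mapsto n_i$ is automatically injective (each iteration overwrites a single cell), this holds as soon as the terms $S_c^1,\dots,S_c^{\mathsf{P}}$ are pairwise distinct, i.e.\ $S_c$ realizes a permutation of $\llbracket 0;\mathsf{P-1}\rrbracket$ over the iterations --- the one place where a structural hypothesis on the choice strategy enters. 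Granting it, $x^P$ is uniform on $\mathbb{B}^\mathsf{N}$ for every $K_1$, and since nothing about the message was used beyond its uniformity, $p(Y\mid K_1)=p(X)$, so SCISMM is stego-secure.

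\medskip

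\noindent\textbf{Expected main obstacle.} The recurrence bookkeeping is routine; the real content is ensuring the selectors $S_c^{n_i}$ are distinct, so that no output bit is over-determined (two correlated message bits landing on two distinct output cells would concentrate the joint law of $x^P$ on a proper coset and destroy uniformity). Note in particular that the short conditioning proof valid for CIIS --- condition on the message, then invoke that vectorial negation is a distribution-preserving bijection on a uniform $x^0$ --- does not transpose here, since substitution, unlike negation, erases the cover's entropy on the cells it rewrites; both the uniformity of $m^0$ and the distinctness of the $S_c$ terms are genuinely needed. The crux is therefore to confirm that the scheme as defined (through the admissible forms of the choice strategy) actually supplies this distinctness.
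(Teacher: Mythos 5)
Your route is genuinely different from the paper's. The paper argues by induction on the iteration index $n$: it fixes a target vector $k$, decomposes the event $\{x^{n+1}=k\}$ over the events $E_{i,j}$ determined by the values of $S_p^{n+1}$ and $S_c^{n+1}$, \emph{asserts} that the four constituents of $E_{i,j}$ are independent, and combines $P(m_j^{n+1}=k_i)=\frac{1}{2}$ with the inductive hypothesis $P(x^n=k)=2^{-\mathsf{N}}$ to conclude. You instead fix the key, unroll the whole dynamics into an explicit affine expression for each coordinate of $x^{\mathsf{P}}$ in terms of a single bit of $x^0$ or of $m^0$, and count fibres of the induced map $\Phi_{K_1}$. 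Your bookkeeping (last-write-wins via $n_i$, the identity $m_j^n=m_j^0\oplus\beta_j^n$, the split of a fibre into free cover bits over $T$ and prescribed message bits over $C$) is correct, and your remark that the CIIS argument does not transpose --- because substitution, unlike negation, erases the cover's entropy on the rewritten cells --- is exactly right.

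However, as submitted your argument does not prove the proposition: it proves it \emph{conditionally} on the injectivity of $i\mapsto S_c^{n_i}$ on $T$, and you explicitly leave that crux open. It cannot be closed from the paper's definitions, since $S_c$ is an arbitrary element of $\mathbb{S}_P$ with no distinctness requirement. Your reduction in fact shows the condition is necessary: if two surviving cells $i\neq i'$ receive bits drawn from the same message index $c=S_c^{n_i}=S_c^{n_{i'}}$, then $x_i^{\mathsf{P}}\oplus x_{i'}^{\mathsf{P}}=\beta_c^{n_i}\oplus\beta_c^{n_{i'}}$ is a constant determined by the strategies, so $x^{\mathsf{P}}$ is supported on a proper affine subspace of $\mathbb{B}^{\mathsf{N}}$ and cannot be uniform (take $\mathsf{N}=\mathsf{P}=2$, $S_p^1=0$, $S_p^2=1$, $S_c^n=S_m^n=0$ for all $n$: whatever time index is meant for $m_{S_c^n}$, the output lies in $\{(0,1),(1,0)\}$). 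This is precisely the point where the paper's own proof is fragile: the claimed independence of the events $\{m_j^{n+1}=k_i\}$ and $\{x^n=k\vee x^n=\tilde{k}_i\}$ fails whenever some coordinate of $x^n$ already carries a copy (negated or not) of $m_j^0$. So the gap you flag is real, but it is a gap in the statement's hypotheses rather than a missing idea on your side; under an added assumption that the relevant terms of $S_c$ are pairwise distinct, your counting argument completes the proof, whereas the paper's argument hides the same requirement inside an unjustified independence claim.
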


\begin{proof}
Let us suppose that $x^0 \sim
\mathbf{U}\left(\mathbb{B}^N\right)$ and $m^0 \sim
\mathbf{U}\left(\mathbb{B}^P\right)$ in a SCISMM setup. 
We will prove by a mathematical induction that $\forall n \in \mathds{N}, x^n
\sim \mathbf{U}\left(\mathbb{B}^N\right)$. The base case is obvious according
to the uniform repartition hypothesis. 

Let us now suppose that the statement $x^n \sim
\mathbf{U}\left(\mathbb{B}^N\right)$ holds for some $n$. 
For a given $k  \in \mathbb{B}^N$,
we denote by $\tilde{k_i} \in \mathbb{B}^N$ the vector defined by: $\forall i
\in  \llbracket 0;\mathsf{N-1}\rrbracket,$
if
$k=\left(k_0,k_1,\hdots,k_i,\hdots,k_{\mathsf{N}-2},k_{\mathsf{N}-1}\right)$,\newline
then $\tilde{k}_i=\left( k_0,k_1,\hdots,\overline{k_i},\hdots,k_{\mathsf{N}-2},k_{\mathsf{N}-1} \right).$

Let
$E_{i,j}$ be the following events: $$\forall (i,j) \in \llbracket
0;\mathsf{N-1}\rrbracket \times \llbracket 0;\mathsf{P-1}\rrbracket, E_{i,j}=$$
$$S_p^{n+1}=i \wedge S_c^{n+1}=j \wedge  m_j^{n+1}=k_i \wedge \left(x^n=k \vee x^n=\tilde{k}_i\right),$$
and
$p=P\left(x^{n+1}=k\right).$
So,
\begin{equation*}
p=P\left(\bigvee_{i \in \llbracket
0;\mathsf{N-1}\rrbracket, j \in \llbracket 0;\mathsf{P-1}\rrbracket}
E_{i,j} \right).
\end{equation*}
We now introduce the following notation: $P_1(i)=P\left(S_p^{n+1}=i\right),$
 $P_2(j)=P\left(S_c^{n+1}=j\right),$
 $P_3(i,j)=P\left(m_j^{n+1}=k_i\right),$
and $P_4(i)=P\left(x^n=k \vee x^n=\tilde{k}_i\right).$

These four events are independent in SCISMM setup, thus:
\begin{equation*}
p=\sum_{i \in \llbracket
0;\mathsf{N-1}\rrbracket, j \in \llbracket 0;\mathsf{P-1}\rrbracket}
P_1(i)P_2(i)P_3(i,j)P_4(i).
\end{equation*}
According to 
Proposition~\ref{prop:CIIS-stego-security}, $P\left(m_j^{n+1}=k_i\right)=\frac{1}{2}$.
As the two events are incompatible:
$$P\left(x^n=k \vee x^n=\tilde{k}_i\right)=P\left(x^n=k\right) + P\left(x^n=\tilde{k}_i\right).$$

Then, by using the inductive hypothesis:
$P\left(x^n=k\right)=\frac{1}{2^N},$
and
$P\left(x^n=\tilde{k}_i\right)=\frac{1}{2^N}.$

Let $S$ be defined by $$S=\sum_{i
\in \llbracket 0;\mathsf{N-1}\rrbracket, j \in \llbracket 0;\mathsf{P-1}\rrbracket}
P_1(i)P_2(j).$$
Then $p =  2 \times \frac{1}{2} \times \frac{1}{2^N} \times S = \frac{1}{2^N} \times S$.

$S$ can now be evaluated:
\begin{equation*}
\begin{array}{lll}
S & = &  \sum_{i
\in \llbracket 0;\mathsf{N-1}\rrbracket, j \in \llbracket 0;\mathsf{P-1}\rrbracket}
P_1(i)P_2(j)\\
  & = &  \sum_{i \in \llbracket 0;\mathsf{N-1}\rrbracket} P_1(i) \times \sum_{j \in \llbracket 0;\mathsf{P-1}\rrbracket}
P_2(j).\\
\end{array}
\end{equation*}

The set of events $\left \{ S_p^{n+1}=i \right \}$ for $i \in \llbracket 0;N-1
\rrbracket$ and the set of events $\left \{ S_c^{n+1}=j \right \}$ for $j \in
\llbracket 0;P-1 \rrbracket$ are both a partition of the universe of possible,
so $S=1$.

Finally, 
$P\left(x^{n+1}=k\right)=\frac{1}{2^N}$, which leads to $x^{n+1} \sim
\mathbf{U}\left(\mathbb{B}^N\right)$.
This result is true $\forall n \in \mathds{N}$, we thus have proven that the
stego-content $y$ is uniform in the set of possible stego-content, so
$y \sim \mathbf{U}\left(\mathbb{B}^N\right) \text{
when } x \sim \mathbf{U}\left(\mathbb{B}^N\right).$
\end{proof}

\section{Topological model}
\label{sec:SCISMM-topological-model}

\noindent In this section, we prove that SCISMM can be modeled as a
discret dynamical system in a topological space. 
We will show in the next section that SCISMM is a case of topological chaos in the sense of Devaney.

\subsection{Iteration Function and Phase Space}

\label{Defining}

Let
\begin{equation*}
\begin{array}{ll}
F: & \llbracket0;\mathsf{N-1}\rrbracket \times
\mathds{B}^{\mathsf{N}}\times \llbracket0;\mathsf{P-1}\rrbracket  \times
\mathds{B}^{\mathsf{P}}
\longrightarrow \mathds{B}^{\mathsf{N}} \\
 & (k,x,\lambda,m)  \longmapsto  \left( \delta
 (k,j).x_j+\overline{\delta (k,j)}.m_{\lambda}\right) _{j\in
 \llbracket0;\mathsf{N-1}\rrbracket}\\
 \end{array}%
\end{equation*}%

\noindent where + and . are the boolean addition and product operations.

Consider the phase space $\mathcal{X}_2$ defined as follow:


\begin{equation*}
\mathcal{X}_2 =  \mathbb{S}_N \times \mathds{B}^\mathsf{N} \times \mathbb{S}_P
\times \mathds{B}^\mathsf{P} \times \mathbb{S}_P,
\end{equation*}
where $\mathbb{S}_N$ and $\mathbb{S}_P$ are the sets introduced in Section \ref{section:Algorithm}.

We define the map $\mathcal{G}_{f_0}:\mathcal{X}_2  \longrightarrow  \mathcal{X}_2$ by:

\begin{equation*}
\mathcal{G}_{f_0}\left(S_p,x,S_c,m,S_m\right) =   
\end{equation*}
\begin{equation*}
\left(\sigma_N(S_p),
F(i_N(S_p),x,i_P(S_c),m),\sigma_P(S_c),G_{f_0}(m,S_m),\sigma_P(S_m)\right)
\end{equation*}

\noindent Then SCISMM can be described by the
iterations of the following discret dynamical system:

\begin{equation*}
\left\{
\begin{array}{l}
X^0 \in \mathcal{X}_2 \\
X^{k+1}=\mathcal{G}_{f_0}(X^k).%
\end{array}%
\right.
\end{equation*}%

\subsection{Cardinality of $\mathcal{X}_2$}

By comparing $\mathcal{X}_2$ and $\mathcal{X}_1$, we have the following result.

\begin{proposition}
The phase space $\mathcal{X}_2$ has, at least, the cardinality of the continuum.
\end{proposition}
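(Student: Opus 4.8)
The plan is to exhibit an injection from $\mathcal{X}_1$ into $\mathcal{X}_2$ and then invoke Proposition~\ref{prop:cardinality-X1}, which already tells us that $\mathcal{X}_1 = \mathbb{S}_\mathsf{N} \times \mathds{B}^\mathsf{N}$ has at least the cardinality of the continuum. Since $\mathcal{X}_2 = \mathbb{S}_N \times \mathds{B}^\mathsf{N} \times \mathbb{S}_P \times \mathds{B}^\mathsf{P} \times \mathbb{S}_P$ is literally $\mathcal{X}_1$ with three extra factors glued on, the natural map to consider is the one that fixes a constant value in each of the new coordinates: pick any $s_c^\star \in \mathbb{S}_P$, any $m^\star \in \mathds{B}^\mathsf{P}$, and any $s_m^\star \in \mathbb{S}_P$ (all of these sets are nonempty, since $\mathsf{P} \in \mathds{N}^\ast$), and define $\varphi : \mathcal{X}_1 \to \mathcal{X}_2$ by $\varphi(S,E) = (S, E, s_c^\star, m^\star, s_m^\star)$.

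The key steps, in order, are: first, observe that $\varphi$ is injective — if $\varphi(S,E) = \varphi(\check S, \check E)$ then comparing the first two coordinates gives $S = \check S$ and $E = \check E$ directly, so this is immediate. Second, conclude that $\mathrm{Card}(\mathcal{X}_2) \geqslant \mathrm{Card}(\mathcal{X}_1)$. Third, apply Proposition~\ref{prop:cardinality-X1} to get $\mathrm{Card}(\mathcal{X}_1) \geqslant \mathfrak{c}$, hence $\mathrm{Card}(\mathcal{X}_2) \geqslant \mathfrak{c}$, which is the claim. One could alternatively phrase this without constructing $\varphi$ explicitly, simply noting that for any nonempty sets $A, B, C$ one has $\mathrm{Card}(Y \times A \times B \times C) \geqslant \mathrm{Card}(Y)$, applied with $Y = \mathcal{X}_1$ and $A = B = \mathbb{S}_P$, $C = \mathds{B}^\mathsf{P}$ (after reordering factors, which is harmless for cardinality).

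There is essentially no obstacle here: the only thing to check is that the three "padding" sets are nonempty, which holds because $\mathbb{S}_P$ is a set of sequences into $\llbracket 0, \mathsf{P-1} \rrbracket$ (nonempty as $\mathsf{P} \geqslant 1$, e.g. the constant-$0$ sequence) and $\mathds{B}^\mathsf{P}$ is finite and nonempty. The statement "at least the cardinality of the continuum" is exactly what lets us avoid computing $\mathrm{Card}(\mathcal{X}_2)$ on the nose (though in fact it is equal to $\mathfrak{c}$, since each factor has cardinality at most $\mathfrak{c}$ and there are finitely many factors); the lower bound is all that is needed and it transfers for free through any such inclusion. I would keep the written proof to two or three sentences, mirroring the brevity of Proposition~\ref{prop:cardinality-X1}.
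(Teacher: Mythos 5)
Your proposal is correct and follows essentially the same route as the paper: the paper's proof defines the injection $\varphi(S,x)=(S,x,0,0,0)$ from $\mathcal{X}_1$ into $\mathcal{X}_2$ and invokes Proposition~\ref{prop:cardinality-X1}, exactly as you do with your constant padding. Your additional remarks on nonemptiness of the padding sets are harmless elaboration, not a difference in method.
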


\begin{proof}
Let $\varphi$ be the map defined as follow:

\begin{equation*}
\begin{array}{lcll}
\varphi: & \mathcal{X}_1 & \longrightarrow & \mathcal{X}_2 \\
 & (S,x) &  \longmapsto  &(S,x,0,0,0)\\
 \end{array}%
\end{equation*}%

\noindent $\varphi$ is  injective.
So the cardinality of  $\mathcal{X}_2$ is greater than or equal to the
cardinality of $\mathcal{X}_1$. And consequently
$\mathcal{X}_2$ has at least the cardinality of the continuum.
\end{proof}

\begin{remark}
This result is independent on the number of cells of the system.
\end{remark}

\subsection{A New Distance on $\mathcal{X}_2$}

We define a new distance on $\mathcal{X}_2$ as follow: 
$\forall X,\check{X} \in \mathcal{X}_2,$ if $X =(S_p,x,S_c,m,S_m)$ and $\check{X} =
(\check{S_p},\check{x},\check{S_c},\check{m}, \check{S_m}),$ then:
\begin{equation*}
\begin{array}{lll}
d_2(X,\check{X}) & = &
d_{\mathds{B}^{\mathsf{N}}}(x,\check{x})+d_{\mathds{B}^{\mathsf{P}}}(m,\check{m})\\
& + &
d_{\mathbb{S}_N}(S_p,\check{S_p})+d_{\mathbb{S}_P}(S_c,\check{S_c})+d_{\mathbb{S}_P}(S_m,\check{S_m}),\\
\end{array}
\end{equation*}
where $d_{\mathds{B}^{\mathsf{N}}}$, $d_{\mathds{B}^{\mathsf{P}}}$,
$d_{\mathbb{S}_N}$, and $d_{\mathbb{S}_P}$ are the same distances than in
Definition~\ref{def:distance-on-X1}.

\subsection{Continuity of SCISMM}

To prove that SCISMM is another example of topological chaos in the
sense of Devaney, $\mathcal{G}_{f_0}$ must be continuous on the metric
space $(\mathcal{X}_2,d_2)$.

\begin{proposition}
$\mathcal{G}_{f_0}$ is a continuous function on $(\mathcal{X}_2,d_2)$.
\end{proposition}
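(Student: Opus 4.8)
The plan is to prove continuity of $\mathcal{G}_{f_0}$ at an arbitrary point $X=(S_p,x,S_c,m,S_m)\in\mathcal{X}_2$ by the usual $\varepsilon$--$\delta$ argument, exploiting the fact that $d_2$ is the sum of the five component distances. First I would observe that it suffices to bound each of the five terms of $d_2(\mathcal{G}_{f_0}(X),\mathcal{G}_{f_0}(\check X))$ separately: given $\varepsilon>0$, I will ensure each term is below $\varepsilon/5$. The three shift-type components are routine: $\sigma_N(S_p)$, $\sigma_P(S_c)$, $\sigma_P(S_m)$ depend continuously on $S_p,S_c,S_m$ because shifting a strategy multiplies its $d_{\mathbb{S}}$-distance from another shifted strategy by at most $10$ (one loses the first term and renormalises), so choosing $\check X$ close enough in those coordinates controls them. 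The substitution of the new boolean vector $m$ in the third slot of $\mathcal{G}_{f_0}$ is handled by the already-established Proposition~\ref{Prop:continuite}: $G_{f_0}$ is continuous on $(\mathcal{X}_1,d_1)$, so the pair $(G_{f_0}(m,S_m)$-part, together with the behaviour of $m$) is continuous in $(m,S_m)$.

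Next I would treat the genuinely new ingredient, the first boolean component $F(i_N(S_p),x,i_P(S_c),m)$. The key point is that $d_{\mathds{B}^{\mathsf{N}}}$ is a discrete (integer-valued) metric, so two points are ``close'' in that coordinate only if they coincide. Concretely, I will pick $\delta$ small enough that $d_2(X,\check X)<\delta$ forces $\check x=x$, $\check m=m$, $\check S_p^{\,1}=S_p^{\,1}$ and $\check S_c^{\,1}=S_c^{\,1}$ — i.e. the first terms of the place and choice strategies agree. For the latter it suffices to take, say, $\delta<\tfrac{9}{10\max(\mathsf{N},\mathsf{P})}$ (or any bound smaller than the smallest possible nonzero value of $\tfrac{9}{\mathsf{N}}\cdot\tfrac{|S^1-\check S^1|}{10}$), and $\delta<1$ to force equality of the boolean vectors. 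Under these constraints, $i_N(\check S_p)=i_N(S_p)$ and $i_P(\check S_c)=i_P(S_c)$, hence $F(i_N(\check S_p),\check x,i_P(\check S_c),\check m)=F(i_N(S_p),x,i_P(S_c),m)$ exactly, so that term of $d_2(\mathcal{G}_{f_0}(X),\mathcal{G}_{f_0}(\check X))$ vanishes. Similarly one forces the first term of $S_m$ to agree if needed so that the $G_{f_0}$ boolean output matches exactly, leaving only the three shift terms to be made small, each of which is at most $10$ times the corresponding input distance.

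Putting this together, given $\varepsilon>0$ I would set $\delta=\min\bigl(1,\ \tfrac{9}{10\max(\mathsf N,\mathsf P)},\ \tfrac{\varepsilon}{30}\bigr)$ (the last factor because each surviving term is inflated by at most $10$ and there are three of them): then $d_2(X,\check X)<\delta$ yields $d_2(\mathcal{G}_{f_0}(X),\mathcal{G}_{f_0}(\check X))\le 10\,d_{\mathbb{S}_N}(S_p,\check S_p)+10\,d_{\mathbb{S}_P}(S_c,\check S_c)+10\,d_{\mathbb{S}_P}(S_m,\check S_m)<3\cdot 10\cdot\tfrac{\varepsilon}{30}=\varepsilon$. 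Since $X$ was arbitrary, $\mathcal{G}_{f_0}$ is continuous on $(\mathcal{X}_2,d_2)$.

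The main obstacle, though a mild one, is bookkeeping: making sure the single $\delta$ simultaneously forces exact agreement of the boolean coordinates and of the relevant leading strategy terms, while still bounding the residual shift contributions. Everything reduces to the two structural facts that the boolean metrics are discrete and that the shift map is Lipschitz on the strategy space, so no real difficulty arises — one could even cite the analogous proof of Proposition~\ref{Prop:continuite} and remark that $\mathcal{G}_{f_0}$ is essentially a product of maps each already known (or trivially seen) to be continuous.
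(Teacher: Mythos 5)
Your proof is correct and rests on exactly the same three structural facts as the paper's: the boolean metrics $d_{\mathds{B}^{\mathsf{N}}}$ and $d_{\mathds{B}^{\mathsf{P}}}$ are integer-valued, so closeness forces exact equality of the cell states; closeness in $d_{\mathbb{S}}$ forces agreement of the leading strategy terms, hence exact equality of the boolean outputs of $F$ and of the message update; and the shift inflates the strategy distance by at most a factor of $10$. The only real difference is presentational: the paper argues by sequential continuity (a convergent sequence $X^n \to X$, with thresholds $n_0,\dots,n_{12}$), whereas you give a direct $\varepsilon$--$\delta$ argument with an explicit modulus $\delta=\min\bigl(1,\tfrac{9}{10\max(\mathsf N,\mathsf P)},\tfrac{\varepsilon}{30}\bigr)$. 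Your version buys a little more: it is uniform in the base point (so you actually get uniform continuity), and your threshold $\tfrac{9}{10\max(\mathsf N,\mathsf P)}$ for forcing the first strategy terms to coincide is more careful than the paper's bound $10^{-1}$, which by itself only guarantees agreement when $\mathsf N\leqslant 9$ (since a disagreement in the first summed term contributes at least $\tfrac{9}{10\mathsf N}$ to $d_{\mathbb{S}_\mathsf{N}}$). One cosmetic remark: your opening plan of bounding each of the five components by $\varepsilon/5$ is superseded by what you actually do (two components are made exactly zero and the three shift terms are each bounded by $\varepsilon/3$); you could delete that sentence.
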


\begin{proof}
We use the sequential continuity.

Let $((S_p)^n,x^n,(S_c)^n,m^n,(S_m)^n)_{n\in \mathds{N}}$ be a sequence of the
phase space $\mathcal{X}_2$, which converges to $(S_p,x,S_c,m,S_m)$. We will prove
that $\left( \mathcal{G}_{f_0}((S_p)^n,x^n,(S_c)^n,m^n,(S_m)^n)\right) _{n\in
\mathds{N}}$ converges to $\mathcal{G}_{f_0}(S_p,x,S_c,m,S_m)$. Let us recall
that for all $n$, $(S_p)^n$, $(S_c)^n$ and $(S_m)^n$ are strategies, thus we
consider a sequence of strategies (\emph{i.e.}, a sequence of sequences).

As $d_2(((S_p)^n,x^n,(S_c)^n,m^n,(S_m)^n),(S_p,x,S_c,m,S_m))$ converges to 0,
each distance
$d_{\mathds{B}^{\mathsf{N}}}(x^n,x)$, $d_{\mathds{B}^{\mathsf{P}}}(m^n,m)$,  
$d_{\mathbb{S}_N}((S_p)^n,S_p)$, $d_{\mathbb{S}_P}((S_c)^n,S_c)$, and
$d_{\mathbb{S}_P}((S_m)^n,S_m)$ converges to 0. But
$d_{\mathds{B}^{\mathsf{N}}}(x^n,x)$ and $d_{\mathds{B}^{\mathsf{P}}}(m^n,m)$
are integers, so $\exists n_{0}\in \mathds{N}, \forall
n \geqslant n_0,
d_{\mathds{B}^{\mathsf{N}}}(x^n,x)=0$ and $\exists n_{1}\in \mathds{N},
\forall n \geqslant n_1, d_{\mathds{B}^{\mathsf{P}}}(m^n,m)=0$. 

Let $n_3=Max(n_0,n_1)$.
In other words, there exists a threshold $n_{3}\in \mathds{N}$ after which no
cell will change its state:
$\exists n_{3}\in \mathds{N},n\geqslant n_{3}\Longrightarrow (x^n = x) \wedge
(m^n = m).$

In addition, $d_{\mathbb{S}_N}((S_p)^n,S_p)\longrightarrow 0$,
$d_{\mathbb{S}_P}((S_c)^n,S_c)\longrightarrow 0$, and
$d_{\mathbb{S}_P}((S_m)^n,S_m)\longrightarrow 0$, so $\exists n_4,n_5,n_6\in
 \mathds{N},$
 \begin{itemize}
   \item $\forall n \geqslant n_4, d_{\mathbb{S}_N}((S_p)^n,S_p)<10^{-1}$,
   \item $\forall n \geqslant n_5, d_{\mathbb{S}_P}((S_c)^n,S_c)<10^{-1}$,
   \item $\forall n \geqslant n_6, d_{\mathbb{S}_P}((S_m)^n,S_m)<10^{-1}$.
 \end{itemize}
 
 Let $n_7=Max(n_4,n_5,n_6)$. For
 $n\geqslant n_7$, all the strategies $(S_p)^n$, $(S_c)^n$, and $(S_m)^n$ have the same
 first term, which are respectively $(S_p)_0$,$(S_c)_0$ and $(S_m)_0$ :$\forall n\geqslant n_7,$
$$((S_p)_0^n={(S_p)}_0)\wedge
((S_c)_0^n={(S_c)}_0)\wedge ((S_m)_0^n={(S_m)}_0).
$$

Let $n_8=Max(n_3,n_7)$.
After the $n_8-$th term, states of $x^n$ and $x$ on the one hand, and $m^n$ and $m$ on the other hand, are
identical. Additionally, strategies $(S_p)^n$ and $S_p$, $(S_c)^n$ and $S_c$, and $(S_m)^n$
and $S_m$  start with the same first term.

Consequently, states of
$\mathcal{G}_{f_0}((S_p)^n,x^n,(S_c)^n,m^n,(S_m)^n)$ and $\mathcal{G}_{f_0}(S_p,x,S_c,m,S_m)$ are equal,
so, after the $(n_8)^{th}$ term, the distance $d_2$ between these two
points is strictly smaller than $3.10^{-1}$, so strictly smaller than 1.

 We now prove that the distance between $\left(
\mathcal{G}_{f_0}((S_p)^n,x^n,(S_c)^n,m^n,(S_m)^n)\right) $ and $\left(
\mathcal{G}_{f_0}(S_p,x,S_c,m,S_m)\right) $ is convergent to 0.
Let $\varepsilon >0$. 

\begin{itemize}
\item If $\varepsilon \geqslant 1$, we have seen that distance
between $\mathcal{G}_{f_0}((S_p)^n,x^n,(S_c)^n,m^n,(S_m)^n)$ and 
$\mathcal{G}_{f_0}(S_p,x,S_c,m,S_m)$ is
strictly less than 1 after the $(n_8)^{th}$ term (same state).
\medskip

\item If $\varepsilon <1$, then $\exists k\in \mathds{N},10^{-k}\geqslant
\frac{\varepsilon}{3} \geqslant 10^{-(k+1)}$. As
$d_{\mathbb{S}_N}((S_p)^n,S_p)$, $d_{\mathbb{S}_P}((S_c)^n,S_c)$ and
$d_{\mathbb{S}_P}((S_m)^n,S_m)$  converges to 0, we have:
\begin{itemize}
 \item $\exists n_9\in \mathds{N},\forall n\geqslant
 n_9,d_{\mathbb{S}_N}((S_p)^n,S_p)<10^{-(k+2)}$,
 \item $\exists n_{10}\in \mathds{N},\forall n\geqslant
 n_{10},d_{\mathbb{S}_P}((S_c)^n,S_c)<10^{-(k+2)}$,
 \item $\exists n_{11}\in \mathds{N},\forall n\geqslant
 n_{11},d_{\mathbb{S}_P}((S_m)^n,S_m)<10^{-(k+2)}$.
\end{itemize}
  
Let $n_{12}=Max(n_9,n_{10},n_{11})$
thus after $n_{12}$, the $k+2$ first terms of $(S_p)^n$ and $S_p$, $(S_c)^n$ and
$S_c$, and $(S_m)^n$ and $S_m$, are equal.
\end{itemize}

\noindent As a consequence, the $k+1$ first entries of the strategies of
$\mathcal{G}_{f_0}((S_p)^n,x^n,(S_c)^n,m^n,(S_m)^n)$ and $
\mathcal{G}_{f_0}(S_p,x,S_c,m,S_m)$ are the same
(due to the shift of strategies) and following the definition of
$d_{\mathbb{S}_N}$ and $d_{\mathbb{S}_P}$:
$$d_2\left(\mathcal{G}_{f_0}((S_p)^n,x^n,(S_c)^n,m^n,(S_m)^n);\mathcal{G}_{f_0}(S_p,x,S_c,m,S_m)\right)$$
is equal to :
$$d_{\mathbb{S}_N}((S_p)^n,S_p) + d_{\mathbb{S}_P}((S_c)^n,S_c) +
d_{\mathbb{S}_P}((S_m)^n,S_m)$$
which is smaller than $3.10^{-(k+1)} \leqslant 3.\frac{\varepsilon}{3} =\varepsilon$.

Let $N_{0}=max(n_{8},n_{12})$. We can claim that
\begin{equation*}
\forall \varepsilon >0,\exists N_{0}\in \mathds{N},\forall n\geqslant N_{0},
\end{equation*}
\begin{equation*}
d_2\left(\mathcal{G}_{f_0}((S_p)^n,x^n,(S_c)^n,m^n,(S_m)^n);\mathcal{G}_{f_0}(S_p,x,S_c,m,S_m)\right)
\leqslant \varepsilon .
\end{equation*}
$\mathcal{G}_{f_0}$ is consequently continuous on $(\mathcal{X}_2,d_2)$.
\end{proof}

\section{SCISMM is chaotic}
\label{section:chaos-security}

\noindent To prove that we are in the framework of Devaney's topological chaos,
we have to check the regularity, transitivity, and sensitivity conditions.

\subsection{Regularity}

\begin{proposition}
\label{prop:regularite}
Periodic points of $\mathcal{G}_{f_0}$ are dense in $\mathcal{X}_2$.
\end{proposition}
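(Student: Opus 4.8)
The plan is to show that any point of $\mathcal{X}_2$ can be approximated arbitrarily well by a periodic point of $\mathcal{G}_{f_0}$, mimicking the canvas used for $G_{f_0}$ on $(\mathcal{X}_1,d_1)$. Fix $X=(S_p,x,S_c,m,S_m)\in\mathcal{X}_2$ and $\varepsilon>0$. Since the metric $d_2$ is a sum of the integer-valued distances $d_{\mathds{B}^\mathsf{N}}$, $d_{\mathds{B}^\mathsf{P}}$ and the three strategy distances $d_{\mathbb{S}_N}$, $d_{\mathbb{S}_P}$, it suffices to build a periodic point whose boolean components agree exactly with $x$ and $m$ and whose three strategies agree with $S_p$, $S_c$, $S_m$ on a long enough prefix (of length $\ell$, chosen so that $\tfrac{9}{\mathsf{N}}\cdot\tfrac{10}{9}\cdot 10^{-\ell}$ and the analogous $\mathbb{S}_P$ tails are each below $\varepsilon/3$).

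First I would run the dynamics forward from $X$ for $\ell$ steps. After $\ell$ applications of $\mathcal{G}_{f_0}$, the shift has consumed the first $\ell$ terms of each strategy, and the boolean parts have evolved to some $x^{(\ell)}\in\mathds{B}^\mathsf{N}$, $m^{(\ell)}\in\mathds{B}^\mathsf{P}$. Now I would construct a ``return path'': because $f_0$ is the vectorial negation, the $m$-component together with the mixing strategy $S_m$ is exactly the $G_{f_0}$-system of Section~\ref{sec:topological}, so there is a finite strategy bringing $m^{(\ell)}$ back to $m$ (flip, one by one, the coordinates where they differ — at most $\mathsf{P}$ steps); call its length $r_m$. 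Simultaneously, since the $x$-update simply overwrites coordinate $S_p^n$ of $x$ by $m_{S_c^n}$, I can choose place/choice strategy terms over a block of length $r$ that rewrite every coordinate of $x^{(\ell)}$ back to the corresponding bit of $x$ — for each coordinate $i$ of $x$ pick a step with $S_p=i$ and $S_c=j$ for some $j$ with $m_j = x_i$ (such a $j$ exists unless $m$ is constant and disagrees with $x_i$ in every coordinate, which is the point needing care; see below). I would take the common block length $L=\max$ of what is needed, padding the shorter reconstructions with harmless repeated moves (e.g. re-flipping a coordinate of $m$ twice, or rewriting an already-correct coordinate of $x$), so that after $\ell+L$ steps the boolean state is back to exactly $(x,m)$. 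Then I define the periodic point $\check X$ to have boolean parts $x,m$ and strategies obtained by concatenating the prefix (first $\ell$ terms of $S_p,S_c,S_m$) with the length-$L$ return block, and repeating this period-$(\ell+L)$ pattern forever; by construction $\mathcal{G}_{f_0}^{\,\ell+L}(\check X)=\check X$, and $d_2(X,\check X)<\varepsilon$ since the boolean parts coincide and all three strategies agree on their first $\ell$ terms.

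The main obstacle is the $x$-reconstruction step: the $x$-update can only set a coordinate to a value that currently appears somewhere in $m$, so if $m$ happens to be the all-zeros (resp. all-ones) vector we cannot directly write a $1$ (resp. $0$) into $x$. The clean way around this is to interleave the $m$-reconstruction so that $m$ passes through intermediate states containing both bit values: since $\mathsf{P}\geq 1$ and the mixing strategy can flip any coordinate of $m$, I would first flip one coordinate of $m^{(\ell)}$ to make $m$ non-constant, then rewrite the needed coordinates of $x$ (now every target bit is available in $m$), and only afterwards finish driving $m$ back to its original value $m$ — being careful to restore any coordinates of $x$ that were touched while $m$ was in a transient state. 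This reshuffling only lengthens the period, which is harmless. One should also note explicitly that this density argument, like Proposition~\ref{prop:cardinality-X1}'s companion statements, is independent of $\mathsf{N}$ and $\mathsf{P}$, and that the periodic point produced indeed lies in $\mathcal{X}_2$ because its strategies are genuine sequences over $\llbracket 0;\mathsf{N-1}\rrbracket$ and $\llbracket 0;\mathsf{P-1}\rrbracket$.
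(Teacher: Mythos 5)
Your plan follows essentially the same route as the paper's proof: fix the boolean components to equal those of the target point (forced since $\varepsilon$ may be $<1$), match the three strategies on a sufficiently long prefix, append a finite correction block that drives the drifted boolean state back to $(x,m)$ — including special handling of the degenerate case where $m$ is constant, which the paper resolves by a two-step flip-and-write instead of your make-$m$-non-constant-then-restore detour — and periodize the resulting strategies. The only quibble is your tail estimate $\frac{9}{\mathsf{N}}\cdot\frac{10}{9}\cdot 10^{-\ell}$: since $|S^k-\check{S}^k|$ can be as large as $\mathsf{N}-1$, the correct bound after matching the first $\ell$ terms is $\frac{\mathsf{N}-1}{\mathsf{N}}\cdot 10^{-\ell}<10^{-\ell}$, which only affects the choice of $\ell$ and not the argument.
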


\begin{proof}
Let $(\check{S_p},\check{x},\check{S_c},\check{m}, \check{S_m})\in
\mathcal{X}_2$ and $\varepsilon >0$. We are looking for a periodic point
$(\widetilde{S_p},\widetilde{x},\widetilde{S_c},\widetilde{m}, \widetilde{S_m})$
satisfying $d_2((\check{S_p},\check{x},\check{S_c},\check{m},
\check{S_m});(\widetilde{S_p},\widetilde{x},\widetilde{S_c},\widetilde{m}, \widetilde{S_m}))<\varepsilon$.

As $\varepsilon$ can be strictly lesser than 1, we must choose
$\widetilde{x} = \check{x}$ and
$\widetilde{m} = \check{m}$.
Let us define $k_0(\varepsilon) =\lfloor
- log_{10}(\frac{\varepsilon}{3} )\rfloor +1$ and consider the set:
$\mathcal{S}_{\check{S_p},\check{S_c},\check{S_m}, k_0(\varepsilon)} =$
$\left\{ S
\in \mathbb{S}_N \times \mathbb{S}_P \times \mathbb{S}_P/ ((S_p)^k =
\check{S_p}^k) \wedge ((S_c)^k =\check{S_c}^k))\right.$
$\left.\wedge ((S_m)^k=\check{S_m}^k)), \forall k \leqslant k_0(\varepsilon)
\right\}$.

Then, 
$\forall (S_p,S_c,S_m) \in
\mathcal{S}_{\check{S_p},\check{S_c},\check{S_m},k_0(\varepsilon)},$
$d_2((S_p,\check{x},S_c,\check{m},S_m);(\check{S_p},\check{x},\check{S_c},\check{m},
\check{S_m})) < 3.\frac{\varepsilon}{3}=\varepsilon.$
It remains to choose $(\widetilde{S_p},\widetilde{S_p},\widetilde{S_p}) \in
\mathcal{S}_{\check{S_p},\check{S_c},\check{S_m}, k_0(\varepsilon)}$ such that
$(\widetilde{S_p},\widetilde{x},\widetilde{S_c},\widetilde{m}, \widetilde{S_m})
= (\widetilde{S_p},\check{x},\widetilde{S_c},\check{m}, \widetilde{S_m})$ is
a periodic point for $\mathcal{G}_{f_0}$.

Let $\mathcal{J} =$
$\left\{ i \in \llbracket0;\mathsf{N-1}\rrbracket / x_i \neq
\check{x_i}, \text{ where }\right.$
$\left.(S_p,x,S_c,m,S_m) = \mathcal{G}_{f_0}^{k_0}
(\check{S_p},\check{x},\check{S_c},\check{m},\check{S_m}) \right\},$
$\lambda = card(\mathcal{J})$, and
$j_0 <j_1 < ... < j_{\lambda-1}$ the elements of $ \mathcal{J}$. 

\begin{enumerate}
  \item Let us firstly build three strategies: $S_p^\ast$,  $S_c^\ast$, and 
  $S_m^\ast$, as follows.
\begin{enumerate}
  \item 
$(S_p^\ast)^k=\check{S_p}^k$, $(S_c^\ast)^k=\check{S_c}^k$, and
$(S_m^\ast)^k=\check{S_m}^k$, if $k \leqslant k_0(\varepsilon).$ 
\item Let us now explain how to replace $\check{x}_{j_q}$, $\forall q \in
  \llbracket0;\mathsf{\lambda-1}\rrbracket$:\newline
First of all, we must replace $\check{x}_{j_0}$:
\begin{enumerate}
  \item If $\exists \lambda_0 \in \llbracket0;\mathsf{P-1}\rrbracket /
  \check{x}_{j_0}=m_{\lambda_0}$, then we can choose
 $(S_p^\ast)^{k_0+1}=j_0$, $(S_c^\ast)^{k_0+1}=\lambda_0$,
 $(S_m^\ast)^{k_0+1}=\lambda_0$, and so $I_{j_0}$ will be equal to $1$.
 \item If such a $\lambda_0$ does not exist, we choose:\newline $(S_p^\ast)^{k_0+1}=j_0$, $(S_c^\ast)^{k_0+1}=0$,
 $(S_m^\ast)^{k_0+1}=0$,\newline $(S_p^\ast)^{k_0+2}=j_0$,
 $(S_c^\ast)^{k_0+2}=0$, $(S_m^\ast)^{k_0+2}=0$, \newline and 
 $I_{j_0}=2$.\newline

All of the $\check{x}_{j_q}$ are replaced similarly. 
The other terms of $S_p^\ast$,  $S_c^\ast$, and 
  $S_m^\ast$ are constructed identically, and the values of $I_{j_q}$ are defined in
  the same way.\newline
 Let $\gamma = \sum_{q=0}^{\lambda-1}I_{j_q}$.
\end{enumerate}

\item Finally, let $(S_p^\ast)^{k}=(S_p^\ast)^{j}$, $(S_c^\ast)^{k}=(S_c^\ast)^{j}$,
and $(S_m^\ast)^{k}=(S_m^\ast)^{j}$, where $j\leqslant k_{0}(\varepsilon )+\gamma$
is satisfying $j\equiv k~[\text{mod } (k_{0}(\varepsilon )+\gamma)]$, if
$k>k_{0}(\varepsilon )+\gamma$.
\end{enumerate}
So,
$\mathcal{G}_{f_0}^{k_{0}(\varepsilon
)+\gamma}(S_p^\ast,\check{x},S_c^\ast,\check{m},S_m^\ast)=(S_p^\ast,\check{x},S_c^\ast,m,S_m^\ast).$
Let $\mathcal{K} =\left\{ i \in \llbracket0;\mathsf{P-1}\rrbracket / m_i \neq
\check{m_i}, \text{ where }\right.$\newline
$\left.\mathcal{G}_{f_0}^{k_{0}(\varepsilon
)+\gamma}(S_p^\ast,\check{x},S_c^\ast,\check{m},S_m^\ast)=(S_p^\ast,\check{x},S_c^\ast,m,S_m^\ast)
\right\},$\newline
$\mu = card(\mathcal{K})$, and
$r_0 <r_1 < ... < r_{\mu-1}$ the elements of $ \mathcal{K}$.

  \item Let us now build the strategies $\widetilde{S_p}$, 
  $\widetilde{S_c}$, $\widetilde{S_m}$.
\begin{enumerate}
  \item Firstly, let $\widetilde{S_p}^k=(S_p^\ast)^k$, $\widetilde{S_c}^k=(S_c^\ast)^k$, and
$\widetilde{S_m}^k=(S_m^\ast)^k$, if $k \leqslant k_0(\varepsilon) + \gamma.$
\item  How to replace $\check{m}_{r_q}, \forall q \in
  \llbracket0;\mathsf{\mu-1}\rrbracket$:\newline
  
First of all, let us explain how to replace $\check{m}_{r_0}$:
\begin{enumerate}
  \item If $\exists \mu_0 \in \llbracket0;\mathsf{N-1}\rrbracket /
  \check{x}_{\mu_0}=m_{r_0}$, then 
 we can choose $\widetilde{S_p}^{k_0+\gamma +1}=\mu_0$, $\widetilde{S_c}^{k_0+\gamma
 +1}=r_0$, $\widetilde{S_m}^{k_0+\gamma
 +1}=r_0$.\newline In that situation, we define $J_{r_0}=1$.
 \item If such a $\mu_0$ does not exist, then we can choose:\newline
$\widetilde{S_p}^{k_0+\gamma +1}=0$, $\widetilde{S_c}^{k_0+\gamma
 +1}=r_0$, $\widetilde{S_m}^{k_0+\gamma
 +1}=r_0$,\newline 
$\widetilde{S_p}^{k_0+\gamma +2}=0$, $\widetilde{S_c}^{k_0+\gamma
 +2}=r_0$, $\widetilde{S_m}^{k_0+\gamma
 +2}=0$,\newline 
$\widetilde{S_p}^{k_0+\gamma +3}=0$, $\widetilde{S_c}^{k_0+\gamma
 +3}=r_0$, $\widetilde{S_m}^{k_0+\gamma
 +3}=0$.\newline 
 Let $J_{r_0}=3$.

Then the other $\check{m}_{r_q}$ are replaced as previously,
  the other terms of $\widetilde{S_p}$, 
  $\widetilde{S_c}$, and $\widetilde{S_m}$ are constructed in the same way, and the
  values of $J_{r_q}$ are defined similarly.\newline
 Let $\alpha = \sum_{q=0}^{\mu-1}J_{r_q}$.
\end{enumerate}
\item Finally, let $\widetilde{S_p}^{k}=\widetilde{S_p}^{j}$,
$\widetilde{S_c}^{k}=\widetilde{S_c}^{j}$, and
$\widetilde{S_m}^{k}=\widetilde{S_m}^{j}$ where $j\leqslant k_{0}(\varepsilon
)+\gamma +\alpha$ is satisfying $j\equiv k~[\text{mod } (k_{0}(\varepsilon
)+\gamma + \alpha)]$, if $k>k_{0}(\varepsilon )+\gamma + \alpha$.
\end{enumerate}
\end{enumerate}

So, $\mathcal{G}_{f_0}^{k_{0}(\varepsilon
)+\gamma+\alpha}(\widetilde{S_p},\check{x},\widetilde{S_c},\check{m},\widetilde{S_m})=(\widetilde{S_p},\check{x},\widetilde{S_c},\check{m},\widetilde{S_m})$

Then, $(\widetilde{S_p},\widetilde{S_c},\widetilde{S_m}) \in
\mathcal{S}_{\check{S_p},\check{S_c},\check{S_m},k_0(\varepsilon)}$ defined as
previous is such that $(\widetilde{S_m},\check{x},\widetilde{S_m},\check{m},\widetilde{S_m})$
 is a periodic point, of period $k_{0}(\varepsilon
)+\gamma+\alpha$, which is $\varepsilon -$close to $(\check{S_p},\check{x},\check{S_c},\check{m}, \check{S_m})$.

As a conclusion, $(\mathcal{X}_2,\mathcal{G}_{f_0})$ is regular.
\end{proof}

\subsection{Transitivity}\label{sec:transitivite}

\begin{proposition}
\label{prop:transitivite}
$(\mathcal{X}_2,\mathcal{G}_{f_0})$ is topologically transitive.
\end{proposition}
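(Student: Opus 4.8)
The plan is to proceed exactly as in the classical argument for $G_{f_0}$ on $\mathcal{X}_1$ recalled in Section~\ref{sec:topological}: rather than verify the open-set condition directly, I would establish the equivalent property that there exists a point whose orbit is dense, by showing that for any two open sets $U,V\subset\mathcal{X}_2$ one can find a point of $U$ whose iterate under some power of $\mathcal{G}_{f_0}$ lands in $V$. Since $d_2$ is a sum of the boolean metrics $d_{\mathds{B}^{\mathsf{N}}}$, $d_{\mathds{B}^{\mathsf{P}}}$ and three copies of the strategic metric, a ball of radius $\varepsilon<1$ around a point $(\check{S_p},\check{x},\check{S_c},\check{m},\check{S_m})$ forces the boolean parts $x=\check{x}$, $m=\check{m}$ to be fixed and pins down a finite prefix (of length $k_0(\varepsilon)=\lfloor -\log_{10}(\varepsilon/3)\rfloor+1$) of each of the three strategies, while leaving the tails of $S_p$, $S_c$, $S_m$ completely free. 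So it suffices to exhibit, starting from an arbitrary such point, a finite sequence of moves that steers the current boolean state $(x,m)$ to any prescribed target $(x',m')$; then any continuation of the strategies lands inside the target ball.

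The key steps, in order, are as follows. First, fix open sets $U,V$, pick $(\check{S_p},\check{x},\check{S_c},\check{m},\check{S_m})\in U$ and a target $(S_p',x',S_c',m',S_m')\in V$ with associated radii $\varepsilon_U,\varepsilon_V$, and set $k_0=k_0(\varepsilon_U)$. Second, I would build a finite block of moves appended after coordinate $k_0$ in $S_p^\ast,S_c^\ast,S_m^\ast$ that does two things: (a) using the choice and mixing strategies, successively prepare in $m$ the bit value needed and copy it into the required position of $x$ via the place strategy, flipping bits of $x$ one at a time until $x$ equals the desired intermediate value, and (b) then, analogously to the regularity proof above (the $\mathcal{J}$/$\mathcal{K}$ bookkeeping with the counters $I_{j_q}$, $J_{r_q}$), correct $m$ back to the target $m'$ — observe that this is precisely the mechanism already used in Proposition~\ref{prop:regularite} to return the message block to a prescribed value, so the same case analysis (does a coordinate of $m$ already hold the wanted bit, or must it be created in two steps) applies verbatim. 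Third, after this finite block of total length $\ell$, the state is exactly $(x',m')$; now choose the tails of $S_p^\ast,S_c^\ast,S_m^\ast$ beyond position $k_0+\ell$ to agree with the first $k_0(\varepsilon_V)$ terms of $S_p',S_c',S_m'$, so that $\mathcal{G}_{f_0}^{\,k_0+\ell}$ of our constructed point lies in $V$. Finally, since the constructed point agrees with $(\check{S_p},\check{x},\check{S_c},\check{m},\check{S_m})$ on the boolean parts and on the first $k_0$ strategic terms, it lies in $U$, and $\mathcal{G}_{f_0}^{\,k_0+\ell}(U)\cap V\neq\varnothing$.

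The main obstacle is step~(a)/(b): I must check that the push-to-target construction is actually realizable given the rigid coupling in $\mathcal{G}_{f_0}$, namely that $x$ is only ever updated by copying a \emph{current} bit of $m$ (selected by $S_c$), while $m$ itself evolves by $G_{f_0}$ driven by $S_m$. Concretely, to force $x_{j}$ to a prescribed value one may need first to arrange, via a move of $S_m$, that some coordinate $m_{\lambda}$ carries that value, then move $S_p$ to $j$ and $S_c$ to $\lambda$ — and doing this for several coordinates of $x$ in sequence requires that earlier corrections are not undone, which is why the moves must be ordered carefully and the auxiliary counters tracked exactly as in the regularity proof. The remaining verifications — that $d_2$ restricted to a small ball behaves as claimed, and that extending strategy prefixes keeps us in the right balls — are routine given Definition~\ref{def:distance-on-X1} and the structure of $d_2$, and mirror the continuity proof above.
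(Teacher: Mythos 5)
Your proposal follows essentially the same route as the paper's proof: fix the boolean parts and a length-$k_0$ prefix of the three strategies to stay in the source ball, append a finite correction block (the $\mathcal{J}$/$\mathcal{K}$ bookkeeping with counters $I_{j_q}$, $J_{r_q}$, handling the coupling whereby $x$ can only receive current bits of $m$) to drive the state to the target, and then continue with the target's strategies. The only quibble is your opening sentence, which announces the dense-orbit reformulation but then immediately describes the direct verification of $\mathcal{G}_{f_0}^{k}(U)\cap V\neq\varnothing$ --- which is exactly what you (and the paper) actually carry out, so nothing is affected.
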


\begin{proof}
Let us define $\mathcal{X}:\mathcal{X}_2\rightarrow \mathbb{B}^{\mathsf{N}},$
such that $\mathcal{X}(S_p,x,S_c,m,S_m)=x$ and
$\mathcal{M}:\mathcal{X}_2\rightarrow \mathbb{B}^{\mathsf{P}},$ such that
$\mathcal{M}(S_p,x,S_c,m,S_m)=m$. Let \linebreak
$\mathcal{B}_{A}=\mathcal{B}(X_{A},r_{A}) $ and
$\mathcal{B}_{B}=\mathcal{B}(X_{B},r_{B})$ be two open balls of $\mathcal{X}_2$,
with\linebreak $X_{A}=((S_p)_{A},x_{A},(S_c)_{A},m_{A},(S_m)_{A})$ and
$X_{B}=((S_p)_{B},x_{B},(S_c)_{B},m_{B},(S_m)_{B})$. We are looking for
$\widetilde{X}=(\widetilde{S_p},\widetilde{x},\widetilde{S_c},\widetilde{m},\widetilde{S_m})$
in $\mathcal{B}_{A} $ such that $\exists n_{0}\in
\mathbb{N},\mathcal{G}_{f_0}^{n_{0}}(\widetilde{X})\in \mathcal{B}_{B}$.\newline
$\widetilde{X}$ must be in $\mathcal{B}_{A}$ and $r_{A}$ can be strictly
lesser than 1, so $\widetilde{x}=x_{A}$ and $\widetilde{m}=m_{A}$. Let
$k_{0}=\lfloor - \log _{10}(\frac{r_{A}}{3})+1\rfloor $.
Let us notice $\mathcal{S}_{X_{A}, k_0} =\left\{ (S_p,S_c,S_m)
\in \mathbb{S}_\mathsf{N}\times (\mathbb{S}_\mathsf{P})^2/ \forall k\leqslant k_{0},\right.$
$\left.(S_p^{k}=(S_p)_{A}^{k})\wedge
(S_c^{k}=(S_c)_{A}^{k})\wedge (S_m^{k}=(S_m)_{A}^{k}))\right\}$.

Then $\forall (S_p,S_c,S_m) \in \mathcal{S}_{X_{A}, k_0},
(S_p,\widetilde{x},S_c,\widetilde{m},S_m)\in \mathcal{B}_{A}.$

Let $\mathcal{J} =\left\{i\in
\llbracket0,\mathsf{N-1}\rrbracket/\check{x}_{i}\neq
\mathcal{X}(X_{B})_{i}, \text{ where }\right.$\newline
$\left.(\check{S_p},\check{x},\check{S_c},\check{m},\check{S_m})=\mathcal{G}_{f_0}^{k_{0}}(X_{A})\right\},$
$\lambda = card(\mathcal{J})$,\newline and
$j_0 <j_1 < ... < j_{\lambda-1}$ the elements of $ \mathcal{J}$.

\begin{enumerate}
  \item Let us firstly build three strategies: $S_p^\ast$,  $S_c^\ast$, and 
  $S_m^\ast$ as follows.
\begin{enumerate}
  \item 
$(S_p^\ast)^k=(S_p)_A^k$, $(S_c^\ast)^k=(S_c)_A^k$, and
$(S_m^\ast)^k=(S_m)_A^k$, if $k \leqslant k_0.$
\item Let us now explain how to replace $\mathcal{X}(X_{B})_{j_q}$, $\forall q \in
  \llbracket0;\mathsf{\lambda-1}\rrbracket$:\newline
First of all, we must replace $\mathcal{X}(X_{B})_{j_0}$:
\begin{enumerate}
  \item If $\exists \lambda_0 \in \llbracket0;\mathsf{P-1}\rrbracket /
  \mathcal{X}(X_{B})_{j_0}=\check{m}_{\lambda_0}$, then we can choose
 $(S_p^\ast)^{k_0+1}=j_0$, $(S_c^\ast)^{k_0+1}=\lambda_0$,
 $(S_m^\ast)^{k_0+1}=\lambda_0$, and so $I_{j_0}$ will be equal to $1$.
 \item If such a $\lambda_0$ does not exist, we choose:\newline
 $(S_p^\ast)^{k_0+1}=j_0$, $(S_c^\ast)^{k_0+1}=0$,
 $(S_m^\ast)^{k_0+1}=0$,\newline $(S_p^\ast)^{k_0+2}=j_0$,
 $(S_c^\ast)^{k_0+2}=0$, $(S_m^\ast)^{k_0+2}=0$ \newline and so let us notice
 $I_{j_0}=2$.\newline

All of the $\mathcal{X}(X_{B})_{j_q}$ are replaced similarly.
The other terms of $S_p^\ast$,  $S_c^\ast$, and $S_m^\ast$ are constructed identically, and the values of $I_{j_q}$ are defined on the same way.\newline
Let $\gamma = \sum_{q=0}^{\lambda-1}I_{j_q}$.
\end{enumerate}

\item $(S_p^\ast)^{k}=(S_p^\ast)^{j}$, $(S_c^\ast)^{k}=(S_c^\ast)^{j}$
and $(S_m^\ast)^{k}=(S_m^\ast)^{j}$ where $j\leqslant k_{0}+\gamma$
is satisfying $j\equiv k~[\text{mod } (k_{0}+\gamma)]$, if
$k>k_{0}+\gamma$.
\end{enumerate}
So,$\mathcal{G}_{f_0}^{k_{0}+\gamma}((S_p^\ast,x_A,S_c^\ast,m_A,S_m^\ast))=(S_p^\ast,x_B,S_c^\ast,m,S_m^\ast)$

Let $\mathcal{K} =\left\{ i \in \llbracket0;\mathsf{P-1}\rrbracket / m_i \neq
\mathcal{M}(X_{B})_{i}, \text{ where }\right.$\newline
$\left.(S_p^\ast,x_B,S_c^\ast,m,S_m^\ast)=\mathcal{G}_{f_0}^{k_{0}+\gamma}((S_p^\ast,x_A,S_c^\ast,m_A,S_m^\ast))\right\},$\newline
$\mu = card(\mathcal{K})$ and $r_0 <r_1 < ... < r_{\mu-1}$ the elements of $
\mathcal{K}$.

  \item Let us secondly build three other strategies:  $\widetilde{S_p}$, 
  $\widetilde{S_c}$, $\widetilde{S_m}$ as follows.
\begin{enumerate}
  \item $\widetilde{S_p}^k=(S_p^\ast)^k$, $\widetilde{S_c}^k=(S_c^\ast)^k$, and
$\widetilde{S_m}^k=(S_m^\ast)^k$, if $k \leqslant k_0 + \gamma.$
\item  Let us now explain how to replace $\mathcal{M}(X_{B})_{r_q}, \forall q
\in \llbracket0;\mathsf{\mu-1}\rrbracket$:
  
First of all, we must replace $\mathcal{M}(X_{B})_{r_0}$:
\begin{enumerate}
  \item If $\exists \mu_0 \in \llbracket0;\mathsf{N-1}\rrbracket /
  \mathcal{M}(X_{B})_{r_0}=(x_B)_{\mu_0}$, then we can choose 
 $\widetilde{S_p}^{k_0+\gamma +1}=\mu_0$, $\widetilde{S_c}^{k_0+\gamma
 +1}=r_0$, $\widetilde{S_m}^{k_0+\gamma
 +1}=r_0$, and $J_{r_0}$ will be equal to $1$.
 \item If such a $\mu_0$ does not exist, we choose:
$\widetilde{S_p}^{k_0+\gamma +1}=0$, $\widetilde{S_c}^{k_0+\gamma
 +1}=r_0$, $\widetilde{S_m}^{k_0+\gamma
 +1}=r_0$,\newline 
$\widetilde{S_p}^{k_0+\gamma +2}=0$, $\widetilde{S_c}^{k_0+\gamma
 +2}=r_0$, $\widetilde{S_m}^{k_0+\gamma
 +2}=0$,\newline 
$\widetilde{S_p}^{k_0+\gamma +3}=0$, $\widetilde{S_c}^{k_0+\gamma
 +3}=r_0$, $\widetilde{S_m}^{k_0+\gamma
 +3}=0$,\newline 
 and so let us notice $J_{r_0}=3$.\newline

All the $\mathcal{M}(X_{B})_{r_q}$ are replaced similarly.
The other terms of $\widetilde{S_p}$, 
$\widetilde{S_c}$, and $\widetilde{S_m}$ are constructed identically, and the
values of $J_{r_q}$ are defined on the same way.\newline
Let $\alpha = \sum_{q=0}^{\mu-1}J_{r_q}$.
\end{enumerate}
\item $\forall k \in \mathbb{N}^\ast, \widetilde{S_p}^{k_0+ \gamma + \alpha
+ k}=(S_p)_B^{k}$, $\widetilde{S_c}^{k_0+ \gamma + \alpha + k}=(S_c)_B^{k}$, and
$\widetilde{S_m}^{k_0+ \gamma + \alpha + k}=(S_m)_B^{k}$.
\end{enumerate}
\end{enumerate}

So,
$\mathcal{G}_{f_0}^{k_{0}+\gamma+\alpha}(\widetilde{S_p},x_A,\widetilde{S_c},m_A,\widetilde{S_m})=X_B$,
with $(\widetilde{S_p},\widetilde{S_c},\widetilde{S_m}) \in \mathcal{S}_{X_{A},
k_0}$.
Then $\widetilde{X} = (\widetilde{S_p},x_A,\widetilde{S_c},m_A,\widetilde{S_m})
\in \mathcal{X}_2$ is such that $\widetilde{X} \in \mathcal{B}_A$ and
$\mathcal{G}_{f_0}^{k_0+\gamma+\alpha}( \widetilde{X}) \in \mathcal{B}_B$.
Finally we have proven the result.
\end{proof}

\subsection{Sensitivity on Initial
Conditions}\label{sec:sensibilite}

\begin{proposition}
\label{prop:sensibilite}
$(\mathcal{X}_2,\mathcal{G}_{f_0})$ has sensitive dependence on initial conditions.
\end{proposition}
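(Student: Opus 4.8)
The plan is to derive this from the two properties already established, via the Banks \emph{et al.} criterion. Propositions~\ref{prop:regularite} and~\ref{prop:transitivite} say that $\mathcal{G}_{f_0}$ is regular and topologically transitive on the metric space $(\mathcal{X}_2,d_2)$; since $(\mathcal{X}_2,d_2)$ is a metric space, Theorem~\ref{theo:banks} applies verbatim and immediately yields that $\mathcal{G}_{f_0}$ has sensitive dependence on initial conditions. So the essential content of the proof is a single invocation of that theorem.

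Because the ``level of topological-security'' is meant to reflect the strength of each chaos property, I would then also give a direct construction producing an explicit constant of sensitivity just below $\mathsf{N}$ (the maximal possible value of $d_{\mathds{B}^{\mathsf{N}}}$). Given $X=(S_p,x,S_c,m,S_m)\in\mathcal{X}_2$ and $\alpha>0$, I would, exactly as in the proofs of Propositions~\ref{prop:regularite} and~\ref{prop:transitivite}, pick $k_0=k_0(\alpha)$ large enough that any point agreeing with $X$ on the first $k_0$ terms of $S_p,S_c,S_m$ and having the same $x$ and $m$ lies within $\alpha$ of $X$, and take $\check{X}$ to be such a point. After $k_0$ iterations, $\mathcal{G}_{f_0}^{k_0}(X)$ and $\mathcal{G}_{f_0}^{k_0}(\check{X})$ carry the same boolean components, while their shifted strategies are free to differ. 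Fixing a horizon $n:=k_0+3\mathsf{N}$, the $x$-component $v$ of $\mathcal{G}_{f_0}^{n}(X)$ is then completely determined by $X$, so I would design the continuation of $\check{S_p},\check{S_c},\check{S_m}$ on the steps $k_0+1,\dots,n$ — using the vectorial negation $f_0$ to place, whenever needed, the value $0$ or $1$ required in the $m$-component, then copying it into the targeted cell of the $x$-component, one coordinate at a time, which is precisely the $I_{j_q}$/$J_{r_q}$ bookkeeping already carried out above — so that the $x$-component of $\mathcal{G}_{f_0}^{n}(\check{X})$ equals $\overline{v}$. Then $d_{\mathds{B}^{\mathsf{N}}}$ between the two $x$-components equals $\mathsf{N}$, hence $d_2\bigl(\mathcal{G}_{f_0}^{n}(X),\mathcal{G}_{f_0}^{n}(\check{X})\bigr)\geqslant\mathsf{N}$, giving sensitivity with any constant $\delta<\mathsf{N}$.

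The only genuine obstacle is that both orbits keep moving: while iterations are spent reconfiguring $\check{X}$'s boolean state, $X$'s $x$-component keeps changing, so one seems to be aiming at a moving target. This is resolved by the order of the quantifiers above — first fix the horizon $n=k_0+3\mathsf{N}$, which freezes the target vector $v$ (the budget $3\mathsf{N}$ comfortably covers, for each of the $\mathsf{N}$ cells, one preparation step for $m$ plus one write step for $x$, the unused steps being padded with harmless repetitions), and only then build $\check{X}$'s strategies to hit $\overline{v}$ at time $n$. With this choice the construction closes and no synchronization difficulty remains; the outstanding verifications are the same routine manipulations of $\sigma_N$, $\sigma_P$, $F$ and $G_{f_0}$ used in the two preceding proofs.
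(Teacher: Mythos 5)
Your first paragraph is exactly the paper's proof: it deduces sensitivity from Propositions~\ref{prop:regularite} and~\ref{prop:transitivite} by a single application of Theorem~\ref{theo:banks}, so the proposal is correct and takes the same approach. The additional explicit construction of a sensitivity constant near $\mathsf{N}$ goes beyond what the paper does and is not needed for the statement, though its outline is plausible given the machinery of the two preceding proofs.
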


\begin{proof}
$\mathcal{G}_{f_0}$ is regular and transitive. Due to Theorem~\ref{theo:banks}, $\mathcal{G}_{f_0}$ is sensitive.
\end{proof}





\subsection{Devaney's topological chaos}

In conclusion, $(\mathcal{X}_2,\mathcal{G}_{f_0})$ is topologically transitive, regular,
and has sensitive dependence on initial conditions. Then we have the result.

\begin{theorem}
\label{theo:chaotic}
$\mathcal{G}_{f_0}$ is a chaotic map on $(\mathcal{X}_2,d_2)$ in the sense of Devaney.
\end{theorem}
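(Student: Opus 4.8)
The plan is to read off the conclusion directly from the definition of Devaney chaos recalled in Section~\ref{sec:Devaney}, checking each of its clauses against results already in hand. Recall that a continuous map $f:\mathcal{X}\to\mathcal{X}$ is \emph{chaotic} when it is regular and topologically transitive, sensitivity on initial conditions then being automatic by Theorem~\ref{theo:banks} (though here it has also been verified directly). So the whole argument amounts to collecting four facts and noting they exhaust the definition.

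First I would record that $\mathcal{G}_{f_0}$ is continuous on $(\mathcal{X}_2,d_2)$, which is exactly the continuity result established in Section~\ref{sec:SCISMM-topological-model} by sequential continuity (the state coordinates $x^n,m^n$ stabilize after a finite threshold because they are integer-valued distances, and the strategy coordinates agree on arbitrarily long prefixes after the shift, so $d_2$ between images tends to $0$). Next, regularity — density of periodic points of $\mathcal{G}_{f_0}$ in $\mathcal{X}_2$ — is Proposition~\ref{prop:regularite}. Then topological transitivity is Proposition~\ref{prop:transitivite}. Finally, sensitive dependence on initial conditions is Proposition~\ref{prop:sensibilite}, which itself follows from the two preceding items via Theorem~\ref{theo:banks}.

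With these ingredients assembled, the conclusion is immediate: $\mathcal{G}_{f_0}$ satisfies every clause of Devaney's definition on $(\mathcal{X}_2,d_2)$, hence it is a chaotic map there and $(\mathcal{X}_2,\mathcal{G}_{f_0})$ is a chaotic dynamical system in the sense of Devaney. I do not expect any real obstacle at this stage: all the substance was expended in the earlier propositions, whose delicate points were the explicit constructions of the place, choice, and mixing strategies realizing respectively a nearby periodic orbit (for regularity, including the bookkeeping of the counters $\gamma,\alpha$ that fix the period) and an orbit carrying one open ball into another (for transitivity). Once those constructions are granted, Theorem~\ref{theo:chaotic} is a one-line corollary.
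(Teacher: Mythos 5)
Your proposal matches the paper's own argument exactly: the paper concludes Theorem~\ref{theo:chaotic} by assembling continuity of $\mathcal{G}_{f_0}$ on $(\mathcal{X}_2,d_2)$ with Propositions~\ref{prop:regularite}, \ref{prop:transitivite}, and~\ref{prop:sensibilite} (the last obtained from Theorem~\ref{theo:banks}), and reads off Devaney's definition. Nothing is missing; the substance indeed lives in the earlier propositions, just as you say.
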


So we can claim that:

\begin{theorem}
\label{theo:SCISMM-chaosecurity}
SCISMM is topologically-secure.
\end{theorem}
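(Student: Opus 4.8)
The plan is to reduce the statement directly to the definition of topological-security (Definition~\ref{Def:chaos-security-definition}) together with the chaos result already obtained. First I would recall that, by Definition~\ref{Def:chaos-security-definition}, an information hiding scheme $S$ is topologically-secure on a metric space $(\mathcal{X},d)$ as soon as (i) $S$ can be written as an iterate process $x^{n+1}=f(x^n)$ on $(\mathcal{X},d)$, and (ii) that iterate process is chaotic in the sense of Devaney.

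For point (i), I would point to Section~\ref{sec:SCISMM-topological-model}: there the SCISMM scheme is explicitly rewritten as the discrete dynamical system $X^0\in\mathcal{X}_2$, $X^{k+1}=\mathcal{G}_{f_0}(X^k)$ on the phase space $\mathcal{X}_2=\mathbb{S}_N\times\mathds{B}^{\mathsf{N}}\times\mathbb{S}_P\times\mathds{B}^{\mathsf{P}}\times\mathbb{S}_P$ equipped with the distance $d_2$, and $(\mathcal{X}_2,d_2)$ is indeed a metric space (each of $d_{\mathds{B}^{\mathsf{N}}}$, $d_{\mathds{B}^{\mathsf{P}}}$, $d_{\mathbb{S}_N}$, $d_{\mathbb{S}_P}$ being a metric, their sum on the product space is again a metric). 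For point (ii), I would simply invoke Theorem~\ref{theo:chaotic}, which states that $\mathcal{G}_{f_0}$ is a chaotic map on $(\mathcal{X}_2,d_2)$ in the sense of Devaney; recall that this theorem is the combination of Proposition~\ref{prop:regularite} (regularity), Proposition~\ref{prop:transitivite} (transitivity), and Proposition~\ref{prop:sensibilite} (sensitivity), the latter following from Theorem~\ref{theo:banks}.

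Combining (i) and (ii): SCISMM is an iterative process on the metric space $(\mathcal{X}_2,d_2)$ whose dynamics are Devaney-chaotic, so by Definition~\ref{Def:chaos-security-definition} SCISMM is topologically-secure, which finishes the proof. In fact, since the notion of topological-security increases with the number of chaotic properties satisfied, I would also note that all three Devaney ingredients are present, so SCISMM sits at a high level in this hierarchy.

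There is essentially no obstacle left at this stage: all the genuine work — the continuity of $\mathcal{G}_{f_0}$ on $(\mathcal{X}_2,d_2)$ and the three Devaney conditions — has already been carried out in the preceding sections, and the final theorem is a one-line corollary. The only point worth keeping in mind is to confirm that the modeling of Section~\ref{sec:SCISMM-topological-model} faithfully reproduces the scheme of Section~\ref{section:Algorithm}, i.e.\ that the orbit of $\mathcal{G}_{f_0}$ started from $(S_p,x^0,S_c,m^0,S_m)$ generates exactly the sequences $x^n$ and $m^n$ defined by the SCISMM recurrences (the $F$ component handling the substitution $x_i^n \mapsto m_{S_c^n}$ on the selected place $S_p^n$, and the $G_{f_0}$ component handling the mixing $m_j^{n-1}\mapsto \overline{m_j^{n-1}}$); but this correspondence is precisely what was built into the definition of $\mathcal{G}_{f_0}$, so it is immediate.
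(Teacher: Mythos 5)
Your proposal is correct and follows exactly the paper's (implicit) argument: the theorem is stated as an immediate corollary of Theorem~\ref{theo:chaotic} via Definition~\ref{Def:chaos-security-definition}, using the modeling of SCISMM as the iterative system $X^{k+1}=\mathcal{G}_{f_0}(X^k)$ on $(\mathcal{X}_2,d_2)$ from Section~\ref{sec:SCISMM-topological-model}. Your additional remark about checking that the topological model faithfully reproduces the recurrences of Section~\ref{section:Algorithm} is a reasonable point of care, but it does not change the route, which is the same as the paper's.
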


\section{Conclusion}\label{sec:conclusion}

\noindent In this research work, a new information hiding scheme has been
introduced. It is topologically-secure and stego-secure, and thus is able to
withstand attacks in Watermark-Only Attack (WOA) and Constant-Message Attack (CMA) setups. These results have been obtained after having studied the topological behavior of this data hiding scheme.
To the best of our knowledge, this algorithm is the third scheme that has been proven to be secure, according to the information hiding security field.

In future work, we intend to study the robustness of this scheme, and to compare it with the two other secure algorithms.
Additionally, we will investigate the topological properties of our scheme, to see whether it is secure in KOA and KMA setups.

\vfill
\bibliographystyle{plain}
\bibliography{jabref}

\begin{thebibliography}{10}

\bibitem{bg10:ij}
Jacques Bahi and Christophe Guyeux.
\newblock Hash functions using chaotic iterations.
\newblock {\em Journal of Algorithms \& Computational Technology},
  4(2):167--181, 2010.

\bibitem{guyeux10ter}
Jacques Bahi and Christophe Guyeux.
\newblock A new chaos-based watermarking algorithm.
\newblock In {\em SECRYPT 2010, International conference on security and
  cryptography}, Athens, Greece, 2010.
\newblock To appear.

\bibitem{ih10}
Jacques~M. Bahi and Christophe Guyeux.
\newblock A chaos-based approach for information hiding security.
\newblock arXiv $N^o$ 0034939, April 2010.

\bibitem{Banks92}
J.~Banks, J.~Brooks, G.~Cairns, and P.~Stacey.
\newblock On devaney's definition of chaos.
\newblock {\em Amer. Math. Monthly}, 99:332--334, 1992.

\bibitem{BarniBF03}
Mauro Barni, Franco Bartolini, and Teddy Furon.
\newblock A general framework for robust watermarking security.
\newblock {\em Signal Processing}, 83(10):2069--2084, 2003.
\newblock Special issue on Security of Data Hiding Technologies, invited paper.

\bibitem{DBLP:conf/ih/BergmairK06}
Richard Bergmair and Stefan Katzenbeisser.
\newblock Content-aware steganography: About lazy prisoners and narrow-minded
  wardens.
\newblock In Camenisch et~al. \cite{DBLP:conf/ih/2006}, pages 109--123.

\bibitem{DBLP:journals/tdp/Bras-AmorosD08}
Maria Bras-Amor{\'o}s and Josep Domingo-Ferrer.
\newblock On overlappings of digitized straight lines and shared steganographic
  file systems.
\newblock {\em Transactions on Data Privacy}, 1(3):131--139, 2008.

\bibitem{Cachin2004}
Christian Cachin.
\newblock An information-theoretic model for steganography.
\newblock {\em Information and Computation}, 192:41 -- 56, 2004.

\bibitem{DBLP:conf/ih/2006}
Jan Camenisch, Christian~S. Collberg, Neil~F. Johnson, and Phil Sallee,
  editors.
\newblock {\em Information Hiding, 8th International Workshop, IH 2006,
  Alexandria, VA, USA, July 10-12, 2006. Revised Selcted Papers}, volume 4437
  of {\em Lecture Notes in Computer Science}. Springer, 2007.

\bibitem{Cayre2008}
Francois Cayre, Caroline Fontaine, and Teddy Furon.
\newblock Kerckhoffs-based embedding security classes for woa data hiding.
\newblock {\em IEEE Transactions on Information Forensics and Security},
  3(1):1--15, 2008.

\bibitem{Cayre2005}
François Cayre, Caroline Fontaine, and Teddy Furon.
\newblock Watermarking security: theory and practice.
\newblock {\em IEEE Transactions on Signal Processing}, 53(10):3976--3987,
  2005.

\bibitem{Devaney89}
Robert~L. Devaney.
\newblock {\em An Introduction to Chaotic Dynamical Systems}.
\newblock Addison-Wesley, Redwood City, CA, 2nd edition, 1989.

\bibitem{DBLP:conf/mdai/Domingo-FerrerB08}
Josep Domingo-Ferrer and Maria Bras-Amor{\'o}s.
\newblock A shared steganographic file system with error correction.
\newblock In Vicen\c{c} Torra and Yasuo Narukawa, editors, {\em MDAI}, volume
  5285 of {\em Lecture Notes in Computer Science}, pages 227--238. Springer,
  2008.

\bibitem{Furon2002}
T.~Furon.
\newblock Security analysis, 2002.
\newblock European Project IST-1999-10987 CERTIMARK, Deliverable D.5.5.

\bibitem{gfb10:ip}
Christophe Guyeux, Nicolas Friot, and Jacques Bahi.
\newblock Chaotic iterations versus spread-spectrum: chaos and stego security.
\newblock In {\em IIH-MSP'10, 6-th Int. Conf. on Intelligent Information Hiding
  and Multimedia Signal Processing}, pages 208--211, Darmstadt, Germany,
  October 2010.

\bibitem{Kalker2001}
T.~Kalker.
\newblock Considerations on watermarking security.
\newblock pages 201--206, 2001.

\bibitem{DBLP:conf/sswmc/KatzenbeisserD04}
Stefan Katzenbeisser and Jana Dittmann.
\newblock Malicious attacks on media authentication schemes based on invertible
  watermarks.
\newblock In Edward~J. Delp and Ping~Wah Wong, editors, {\em Security,
  Steganography, and Watermarking of Multimedia Contents}, volume 5306 of {\em
  Proceedings of SPIE}, pages 838--847. SPIE, 2004.

\bibitem{Ker06}
Andrew~D. Ker.
\newblock Batch steganography and pooled steganalysis.
\newblock In Camenisch et~al. \cite{DBLP:conf/ih/2006}, pages 265--281.

\bibitem{Mittelholzer99}
Thomas Mittelholzer.
\newblock An information-theoretic approach to steganography and watermarking.
\newblock In Andreas Pfitzmann, editor, {\em Information Hiding}, volume 1768
  of {\em Lecture Notes in Computer Science}, pages 1--16, Dresden, Germany,
  September 29 - October 1. 1999. Springer.

\bibitem{Perez06}
Luis Perez-Freire, F.~PÃ©rez-gonzalez, and Pedro ComesaÃ±a.
\newblock Secret dither estimation in lattice-quantization data hiding: A
  set-membership approach.
\newblock In Edward~J. Delp and Ping~W. Wong, editors, {\em Security,
  Steganography, and Watermarking of Multimedia Contents}, San Jose,
  California, USA, January 2006. SPIE.

\bibitem{Shannon49}
Claude~E. Shannon.
\newblock Communication theory of secrecy systems.
\newblock {\em Bell Systems Technical Journal}, 28:656--715, 1949.

\bibitem{Shujun1}
Li~Shujun, Li~Qi, Li~Wenmin, Mou Xuanqin, and Cai Yuanlong.
\newblock Statistical properties of digital piecewise linear chaotic maps and
  their roles in cryptography and pseudo-random coding.
\newblock {\em Proceedings of the 8th IMA International Conference on
  Cryptography and Coding}, 1:205--221, 2001.

\bibitem{Simmons83}
Gustavus~J. Simmons.
\newblock The prisoners' problem and the subliminal channel.
\newblock In {\em Advances in Cryptology, Proc. CRYPTO'83}, pages 51--67, 1984.

\end{thebibliography}

%
%
\end{document}